\newcommand*\diff{\mathop{}\!\mathrm{d}} 
\DeclarePairedDelimiter\floor{\lfloor}{\rfloor}
\crefname{hypothesis}{Hypothesis}{Hypotheses}
\title{Non-adiabatic transitions in multiple dimensions\thanks{Submitted to the editors DATE.
\funding{T. Hurst was supported by The Maxwell Institute Graduate School in Analysis and its Applications, a Centre for Doctoral Training funded by the UK Engineering and Physical Sciences Research Council (grant EP/L016508/01), the Scottish Funding Council, Heriot-Watt University and the University of Edinburgh.}}}
\author{V. Betz \thanks{Fachbereich Mathematik, Technische Universit\"at Darmstadt, 64289 Darmstadt, Germany (\email{betz@mathematik.tu-darmstadt.de}, \url{http://www.mathematik.tu-darmstadt.de/\~betz/}).}
\and B. D. Goddard\thanks{School of Mathematics and the Maxwell Institute for Mathematical Sciences, University of Edinburgh, Edinburgh, UK, EH9 3FD (\email{b.goddard@ed.ac.uk}, \url{http://www.maths.ed.ac.uk/\~ bgoddard/}).}
\and T. Hurst\thanks{Maxwell Institute for Mathematical Sciences, School of Mathematics, University of Edinburgh, Edinburgh, UK, EH9 3FD
  (\email{t.hurst@sms.ed.ac.uk}, \url{http://www.maxwell.ac.uk/migsaa/people/tim-hurst}).}
}
\begin{document}
	\maketitle
%
	\begin{abstract}
		We consider non-adiabatic transitions in multiple dimensions, which occur when the Born-Oppenheimer approximation breaks 
		down. We present a general, multi-dimensional algorithm which can be used to accurately and efficiently compute the transmitted 
		wavepacket at an avoided
		crossing.  The algorithm requires only one-level Born-Oppenheimer dynamics and local knowledge of the potential surfaces.
		Crucially, in contrast to standard methods in the literature, we compute the whole wavepacket, including its phase,
		rather than simply the transition probability.  We demonstrate the excellent agreement with full quantum dynamics for a 
		range of examples in two dimensions.  We also demonstrate surprisingly good agreement for a system with a full conical 
		intersection.
	\end{abstract}
%
	\begin{keywords}
		time-dependent Schr\"{o}dinger equation, non-adiabatic transitions, superadiabatic representations.
	\end{keywords}
%
	\begin{AMS}
		35Q40, 81V55
	\end{AMS}
	\section{Introduction}\label{Section:Introduction}
	Many computations in quantum molecular dynamics rely on the Born-Oppenheimer Approximation (BOA) \cite{BornOppenheimer27}, which utilises the small ratio $\varepsilon^2$ of electronic and reduced nuclear masses to replace the electronic degrees of freedom with {\em Born-Oppenheimer} potential surfaces. When these surfaces are well separated, the BOA further reduces computational complexity by decoupling the dynamics to individual surfaces.
	
	However, there are many physical examples \cite{Domcke04},\cite{Domcke11},\cite{Nakamura12} and \cite{Tully12} where the Born-Oppenheimer surfaces are not well separated or even have a full intersection. In these regions the BOA breaks down, and the coupled dynamics must be considered; when a wavepacket travels over a region where the surfaces are separated by a small but none-vanishing amount, a chemically crucial portion of the wavepacket can move to a different energy level via a {\em non-adiabatic transition}. The existence of the small parameter $\varepsilon$ introduced several challenges when attempting to numerically approximate the dynamics.
First, and independent of the existence of an avoided or full crossing, the wavepacket oscillates with frequency $1/\varepsilon$ and hence
a very fine computational grid is required.  Furthermore, in the region of an avoided crossing, the dynamics produce rapid oscillations;  the transmitted wavepacket very close to the crossing is $\mathcal{O}(\varepsilon)$, but in the scattering regime the transmission is exponentially small.  It is therefore necessary to travel far from the avoided crossing with a small time-step to accurately calculate the phase, size and shape of the transmitted wavepacket.  In order to calculate the exponentially small wavepacket, one must ensure that the absolute errors in a given numerical scheme are also exponentially small, or they will swamp the true result.  Finally, the number of gridpoints in the domain increases exponentially as the dimension of the system increases. Thus standard numerical algorithms quickly become computationally intractable.
	
	Many efforts have been made to avoid computational expense by approximating the transmitted wavepacket while avoiding the coupled dynamics. {\em Surface hopping algorithms} discussed in \cite{Tully71,MillerGeorge72,Stine76,Kuntz79,Tully90,HammesSchifferTully94,MullerStock97,FabianoGroenhofThiel08,
		FermanianKammererLasser08,LasserSwart08,BelyaevLasserTriglia14,BelyaevDomckeLasserTriglia15}
	 approximate the transition using classical dynamics, where the {\em Landau-Zener transition rate} \cite{Zener32}, \cite{Landau65} is used to determine the size of the transmitted wavepacket. This method has enjoyed some success, and has recently been applied to higher dimensional systems \cite{LasserSwart08,BelyaevLasserTriglia14}. However, the full transmitted quantum wavepacket is not calculated; phase information is lost. Such information is crucial when considering systems with interference effects, e.g.\ ones in which the initial wavepacket makes multiple transitions through an avoided crossing.  Recently, there have been efforts to include phase information in surface hopping algorithms \cite{ChaiJinLiMorandi15}. In contrast, in \cite{BetzGoddardTeufeul09} and \cite{BetzGoddard09}, a formula is derived to accurately approximate the full transmitted wavepacket, in one dimension, using only decoupled dynamics. The formula has been applied to a variety of examples with accurate results, including the transmitted wavepacket due to photo-dissociation of sodium iodide \cite{BetzGoddardManthe16}.
	
	In this paper we construct a method to apply the formula derived in \cite{BetzGoddardTeufeul09} and \cite{BetzGoddard09} to higher dimensional problems. We begin in  \cref{Section:Superadiabatic,Section:1DFormula} by outlining the derivation of the formula in one dimension \cite{BetzGoddardTeufeul09}, which involves deriving and approximating algebraic differential recursive equations for the quantum symbol of the coupling operator in {\em superadiabatic representations}. We extend these derivations to $d$ dimensions in \cref{Section:NDCouplingOperators}. In \cref{Section:SlicingAlgorithm} we create an $d$-dimensional formula for systems which are slowly varying in all but one dimension, then extend this result via a simple algorithm to obtain a general $d$-dimensional formula. We provide some examples and results in \cref{Section:Numerics} and note conclusions and future work in \cref{Section:Conclusions}.
	
	\section{Superadiabatic Representations}\label{Section:Superadiabatic}		
	Consider the evolution of a semiclassical wavepacket in $d$ dimensions with $\bm{x}\in\mathbb{R}^d$ at time $t$, $\psi=\left(\begin{smallmatrix} \psi_1(\bm{x},t) \\ \psi_2(\bm{x},t)\end{smallmatrix}\right)$, governed by the following equation:
	\begin{align}
	i\varepsilon\partial_t\psi(\bm{x},t)=H\psi(\bm{x},t),\label{eq:DiabaticSchrodinger}
	\end{align}
	where $\varepsilon^2$ is the ratio between an electron and the reduced nuclear mass of the molecule, {\em i.e.} $\varepsilon<<1$. This system is derived after a rescaling of a two level Schr\"{o}dinger equation \cite{GoddardHurst18}. Note that as we are considering semiclassical wavepackets, the derivatives of which are of order $1/\varepsilon$. The Hamiltonian of a two level system is given by \cite{BetzGoddard09}
	\begin{align}
	H&=-\frac{\varepsilon^2}{2}\nabla_{\bm{x}}^2I+V(\bm{x})+d(\bm{x})I,\label{eq:DiabaticHamiltonian}
	\end{align}
	where
	\begin{align}
		V(\bm{x})=\begin{pmatrix}
		Z(\bm{x}) & X(\bm{x}) \\
		X(\bm{x}) & -Z(\bm{x})
		\end{pmatrix}\label{eq:DiabaticPotential}
	\end{align}
	and $d(\bm{x})$ is the part of the potential operator with non-zero trace. In general $V(\bm{x})$ can be given by a Hermitian matrix, but as noted in \cite{Berry90}, any Hermitian $V(\bm{x})$ can be transformed into real symmetric form. It is useful to define $\theta(\bm{x})=\tan^{-1}\left(\frac{X(\bm{x})}{Z(\bm{x})}\right)$, so that
	\[
		\cos\left(\theta(\bm{x})\right)=\frac{Z(\bm{x})}{\sqrt{X(\bm{x})^2+Z(\bm{x})^2}}, \qquad 
	 \sin\left(\theta(\bm{x})\right)=\frac{X(\bm{x})}{\sqrt{X(\bm{x})^2+Z(\bm{x})^2}}.
	 \]
	Then, defining $\rho(\bm{x})=\sqrt{X(\bm{x})^2+Z(\bm{x})^2}$, gives
\begin{align}
	V(\bm{x})&=\rho(\bm{x})\begin{pmatrix} \cos(\theta(\bm{x})) & \sin(\theta(\bm{x})) \\
	\sin(\theta(\bm{x})) & -\cos(\theta(\bm{x})) 
	\end{pmatrix}.\label{eq:Potential}
	\end{align}
	This is known as the {\em diabatic representation} of the system. We define $V_1=Z(\bm{x})+d(\bm{x})$ and $V_2=-Z(\bm{x})+d(\bm{x})$ as the two {\em diabatic potentials}, with the {\em diabatic coupling element} as the off-diagonal element $V_{12}=X(\bm{x})$. Consider the unitary matrix $U_0$ which diagonalises the potential operator $V(x)$:
	\begin{align}
	U_0(\bm{x})=\begin{pmatrix}
	\cos\left(\frac{\theta(\bm{x})}{2}\right) & \sin\left(\frac{\theta(\bm{x})}{2}\right)\\
	\sin\left(\frac{\theta(\bm{x})}{2}\right) & -\cos\left(\frac{\theta(\bm{x})}{2}\right)
	\end{pmatrix}.\label{eq:AdiabaticUnitary}
	\end{align}
	If we define $\phi_0(\bm{x})=U_0(\bm{x})\phi(\bm{x})$, then we arrive at the {\em adiabatic Schr\"{o}dinger equation}
	\begin{align}
	i\varepsilon\partial_t\psi_0(\bm{x},t)=H_0\psi_0(\bm{x},t).\label{eq:AdiabaticSchrodinger}
	\end{align}
	Here $H_0=U_0HU_0^{-1}$ is given by
	\begin{align}
	H_0&=-\frac{\varepsilon^2}{2}\nabla_{\bm{x}}^2+\begin{pmatrix}
	\rho(\bm{x})+d(\bm{x})+\varepsilon^2\frac{\|\nabla_{\bm{x}}\theta(\bm{x})\|^2}{8} & -\varepsilon\frac{\nabla_{\bm{x}}\theta(\bm{x})}{2}\cdot(\varepsilon\nabla_{\bm{x}})-\varepsilon^2\frac{\nabla_{\bm{x}}^2\theta(\bm{x})}{4}\\
	\varepsilon\frac{\nabla_{\bm{x}}\theta(\bm{x})}{2}\cdot(\varepsilon\nabla_{\bm{x}})+\varepsilon^2\frac{\nabla_{\bm{x}}^2\theta(\bm{x})}{4} & -\rho(\bm{x})+d(\bm{x})+\varepsilon^2\frac{\|\nabla_{\bm{x}}\theta(\bm{x})\|^2}{8}
	\end{pmatrix}.\label{eq:AdiabaticHamiltonian}
	\end{align}
	The {\em adiabatic potential surfaces} are given by the diagonal entries of the adiabatic potential matrix to leading order,
	\begin{align}
	V_U=\rho(\bm{x})+d(\bm{x}),\quad V_L=-\rho(\bm{x})+d(\bm{x}),\label{eq:AdiabaticSurfaces}
	\end{align}
	where $V_U$ is the upper adiabatic potential surface, and $V_L$ is the lower adiabatic potential surface. Assuming the initial wavepacket $\phi$ is purely on the upper level, the adiabatic representation approximates the transmitted wavepacket to leading order by the perturbative solution \cite{Teufel03}
	\begin{align}
	\psi_0^-(t)=-i\varepsilon\int_{-\infty}^te^{-\frac{i}{\varepsilon}(t-s)H^-}\kappa_1^-(\bm{x})\cdot(\varepsilon\partial_{\bm{x}})e^{-\frac{i}{\varepsilon}sH^+}\phi \diff s,\label{eq:AdiabaticPert}
	\end{align} 
	where
	\begin{gather}
	H^\pm=-\frac{\varepsilon^2}{2}\nabla_x^2\pm\rho(\bm{x})+d(\bm{x}),\label{eq:AdiaTerms}\quad	\kappa_1^\pm(\bm{x})=\pm\frac{\partial_{\bm{x}}\theta(\bm{x})}{2}.
	\end{gather}
	The perturbative solution in the adiabatic representation does not offer much explanation to the properties of the transmitted wavepacket. For instance, the constructed wavepacket at first looks to be $\mathcal{O}(\varepsilon)$. However due to the adiabatic coupling operator $\kappa_1^\pm$, fast oscillations and cancellations between upper and lower transmissions occur near the avoided crossing, so that far from  the crossing the transmitted wavepacket is much smaller than the transition at the crossing point (\cref{fig:MassDown}).
	\begin{figure}[ht!]
		\centering
		\hspace{-18mm}
		\includegraphics[width=0.5\textwidth]{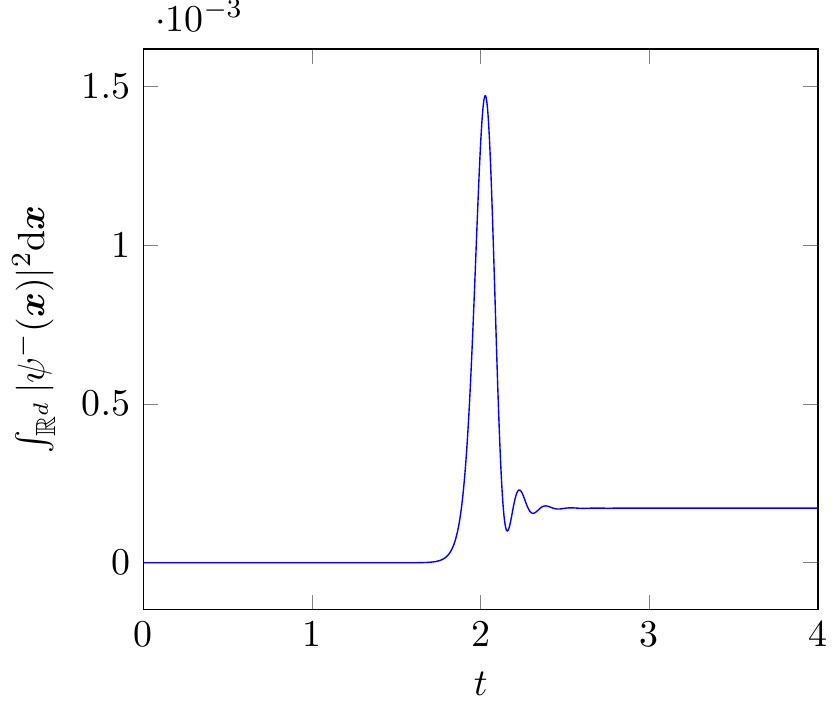}
		\caption{The total mass of wavepacket $\psi^-(\bm{x})$ on the lower potential surface against time $t$, for the system described in \cref{Example:tanh} with parameters in \cref{Parameters1}.
		}\label{fig:MassDown}			
	\end{figure}
	For this reason, the transmitted wavepacket is better approximated using the perturbative solution from the $n^{th}$ {\em superadiabatic} representation \cite{BetzGoddardTeufeul09}, for some optimal choice of $n$. The $n^{th}$ superadiabatic representation is produced by creating and applying unitary {\em pseudodifferential} operators $U_n$, such that the off-diagonal elements of the potential operator have prefactor $\varepsilon^{n+1}$, and the diagonal elements are the same as in the adiabatic representation. Existence of such operators is discussed in \cite{BetzGoddardTeufeul09}. In the $n^{th}$ superadiabatic representation the perturbative solution gives
	\begin{align}
	\psi_n^-(t)=-i\varepsilon^n\int_{-\infty}^te^{-\frac{i}{\varepsilon}(t-s)H_n^-}K_{n+1}^-(\bm{x})e^{-\frac{i}{\varepsilon}sH_n^+}\phi \diff s,\label{eq:SuperAdiabaticPert}
	\end{align} 
	where $H_n$ is the Hamiltonian in the $n^{th}$ superadiabatic representation, given by
	\begin{align}
	H_n=-\frac{\varepsilon^2}{2}\nabla_{\bm{x}}^2\bm{I}+\begin{pmatrix}
	\rho(\bm{x})+d(\bm{x}) & K_{n+1}^+ \\
	K_{n+1}^- & -\rho(\bm{x})+d(\bm{x})
	\end{pmatrix},\label{SuperadiabaticHamiltonian}
	\end{align}	
	for some pseudodifferential coupling operators $K_{n+1}^\pm$, and
	\begin{align}
	H_n^\pm&=-\frac{\varepsilon^2}{2}\nabla_{\bm{x}}^2\pm\rho(\bm{x})+d(\bm{x}).
	\end{align}
	Unfortunately, the need to compute to compute the pseudodifferential operators $K_{n+1}^\pm$  and $U_n$ prevent this from directly 
	producing a practical numerical scheme.  However, as we now demonstrate, we may make use of the superadiabatic representations to obtain a simple and accurate algorithm.
	\section{Approximating the transition in one dimension}\label{Section:1DFormula}
	The derivation in \cite{BetzGoddard09} requires $\rho(\bm{x})\ge\delta>0$ and $\theta,\rho$ to be analytic in a strip containing the real axis. The formula is derived in one dimension using the superadiabatic perturbative solution by
	\begin{enumerate}
		\item Finding algebraic recursive differential equations to calculate the {\em quantum symbol} $\kappa_{n+1}^\pm$, where $K_{n+1}^\pm$ is the {\em Weyl quantisation} of $\kappa_{n+1}^\pm$,
		\begin{align}
		(\mathcal{W}(\kappa_{n+1}^\pm)\psi)(x)&=\frac{1}{2\pi}\int_{\mathbb{R}^{2n}}\diff \xi \diff yK_{n+1}^\pm\left(\xi,\frac{1}{2}(x+y)\right)e^{i\xi\cdot(x-y)}\psi(y).\label{eq:1DWeylQuantisation}
		\end{align}
		\item Rescaling $\kappa_{n+1}^\pm$  by
		\begin{align}
		\tau(q)=2\int_0^q\rho(r)\diff r,\label{eq:1DTau}
		\end{align}
		then approximating $\kappa_{n+1}^\pm$ in an analogous way to the time-adiabatic case in \cite{BetzTeufel05-1}.
		\item Assuming the potential surfaces are approximately linear near the avoided crossing, i.e.\
		$
		H^\pm\approx\frac{\varepsilon^2}{2}\partial_x^2\pm\delta+\lambda x
		$,
		to utilise the Avron-Herbst formula \cite{AvronHerbst77}.
		\item Applying a stationary phase argument to evaluate the remaining integral.
	\end{enumerate}
	The result is presented in {\em scaled momentum space}: in $d$ dimensions the wavepacket in scaled momentum space is given using the {\em $\varepsilon$-scaled Fourier transform} 
	\begin{align}
	\widehat{f}^\varepsilon(\bm{p})=\frac{1}{(2\pi\varepsilon)^{d/2}}\int_{\mathbb{R}^d}f(\bm{x})\exp\left(-\frac{i}{\varepsilon}\bm{p}\cdot \bm{x}\right) \diff \bm{x}.
	\end{align}
	Following this derivation leads to an approximation of the transmitted wavepacket, far from the avoided crossing:
	\begin{align}
	\widehat{\psi^-}^\varepsilon(k,t)&=e^{-\frac{i}{\varepsilon}t\widehat{H}^-(k)}\frac{\nu(k)+k}{2|\nu(k)|}e^{-\frac{\tau_c}{2\delta\varepsilon}|k-\nu(k)|}e^{-\frac{i\tau_r}{2\delta\varepsilon}(k-\nu(k))}\chi_{k^2>4\delta}\widehat{\phi_0^+}^\varepsilon(\nu(k)),\label{eq:1DFormula}
	\end{align}
	where 
	\begin{itemize}
		\item There is no dependence on the $n^{th}$ superadiabatic representation used in the formula derivation.
		\item $\widehat{\phi_0^+}^\varepsilon$ is the wavepacket on the upper level evolved to the avoided crossing using uncoupled dynamics.
		\item $\delta=\text{min}(\rho)$, half the distance between the two adiabatic potential surfaces at the avoided crossing.
		\item  $\nu(k)=\text{sgn}(\sqrt{k^2-4\delta})$, the initial momentum a classical particle would need to have momentum $k$ after falling down a potential energy difference of $2\delta$, {\em i.e.} the distance between the potential surfaces at the avoided crossing, which shifts the wavepacket in momentum space.
		\item $\tau^{cz}=\tau_r + i\tau_c =2\int_0^{q^{cz}}\rho(q)\diff q$, where $q^{cz}$ is the smallest complex zero of $\rho$, when extended to the complex plane. The prefactor $e^{-\frac{\tau_c}{2\delta\varepsilon}|\nu(k)-k|}$ determines the size of the transmitted wavepacket. In \cite{GoddardHurst18}, we show that under appropriate approximations of the momentum and potential surfaces, this prefactor is comparable to the Landau-Zener transition prefactor used in surface hopping algorithms such as in \cite{BelyaevLasserTriglia14}. An additional change in phase occurs due to $\tau_r$, which is present when the potential is not symmetric about the avoided crossing.
	\end{itemize}
	The constructed formula \cref{eq:1DFormula} allows us to approximate the size and shape of the transmitted wave packet due to an avoided crossing, and avoid computing expensive coupled dynamics. The method for applying the algorithm is as follows:
	\begin{enumerate}
		\item Begin with an initial wave packet $\psi_0^+$ on the upper adiabatic energy surface, far from the crossing, with momentum such that the wave packet will cross the minimum of $\rho$ (\cref{fig:1DToyExampleFormulaA}).
		\item Evolve $\psi_0^+$ according to the BOA on the upper adiabatic level until the centre of mass is at the avoided crossing, at time $t^{cz}$ (\cref{fig:1DToyExampleFormulaB}), say $\phi_0^+(x,t^{cz})\coloneqq e^{-\frac{i}{\varepsilon}t^{cz}H^+}\psi_0^+(x)$,
		\item Apply the one dimensional formula to the $\varepsilon$-Fourier transform of the wave packet at the crossing (\cref{fig:1DToyExampleFormulaC}):
		\begin{align}
		\widehat{\psi^-}^\varepsilon(x,t^{cz}) = \frac{\nu(k)+k}{|\nu(k)|}e^{-\frac{\tau_c}{2\delta\varepsilon}|k-\nu(k)|}e^{-\frac{i\tau_r}{2\delta\varepsilon}(k-\nu(k))}\chi_{k^2>4\delta}\widehat{\phi_0^+}^\varepsilon(\nu(k),t^{cz}),
		\end{align}
		\item Evolve the transmitted wave packet far away enough from the crossing, say to time $t^{cz}+s$, using the BOA (\cref{fig:1DToyExampleFormulaD}): $\widehat{\psi^{-}}^\varepsilon(x,t+s)=e^{-\frac{i}{\varepsilon}s\widehat{H}^-}\widehat{\psi^{-}}^\varepsilon(x,t)$. At this time the transmitted wave packet in momentum space should be accurately approximated by the formula result, as long as the wave packet is evolved away from the area in which oscillations occur.
	\end{enumerate}
	\begin{figure}[ht!]
		\centering
		\subfloat[][]{\includegraphics[width=0.5\textwidth]{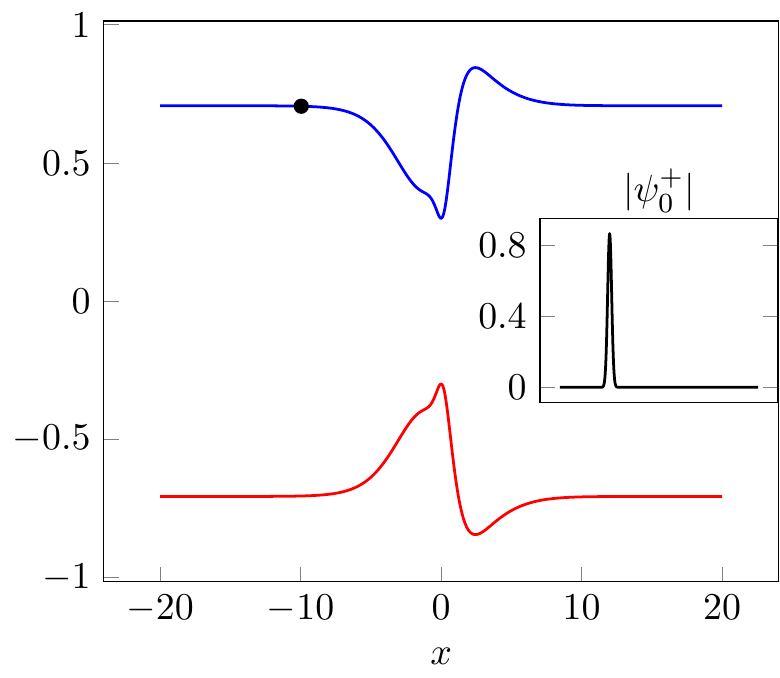}\label{fig:1DToyExampleFormulaA}}
		\subfloat[][]{\includegraphics[width=0.5\textwidth]{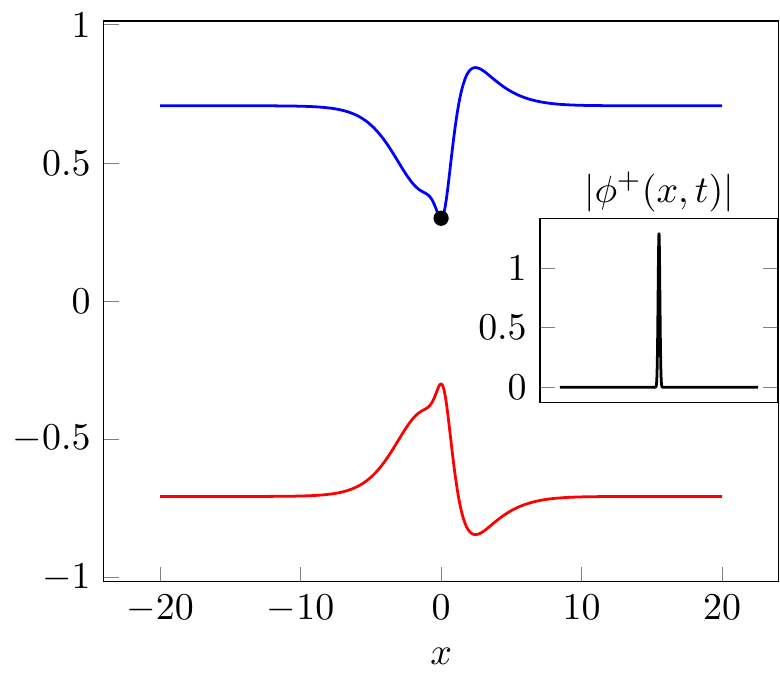}\label{fig:1DToyExampleFormulaB}}\\
		\subfloat[][]{\includegraphics[width=0.5\textwidth]{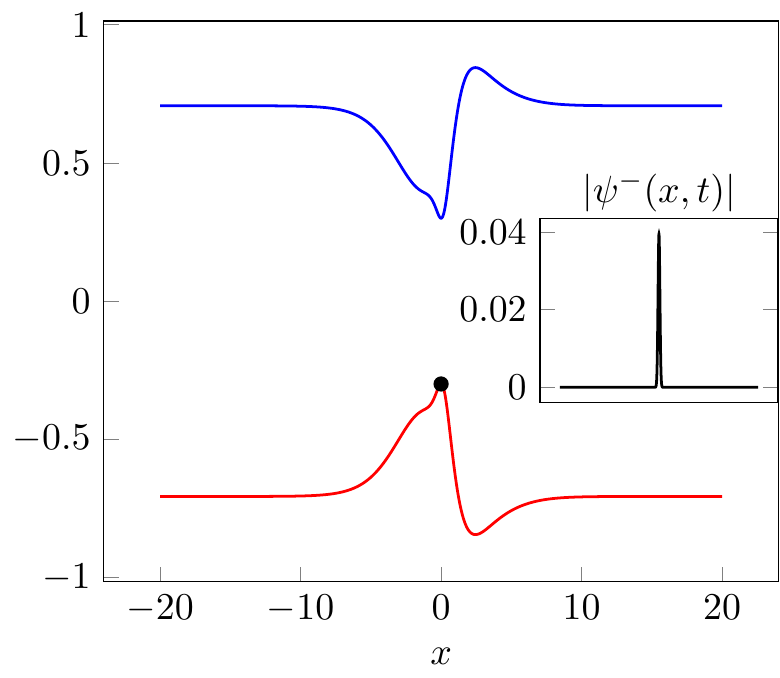}\label{fig:1DToyExampleFormulaC}}
		\subfloat[][]{\includegraphics[width=0.5\textwidth]{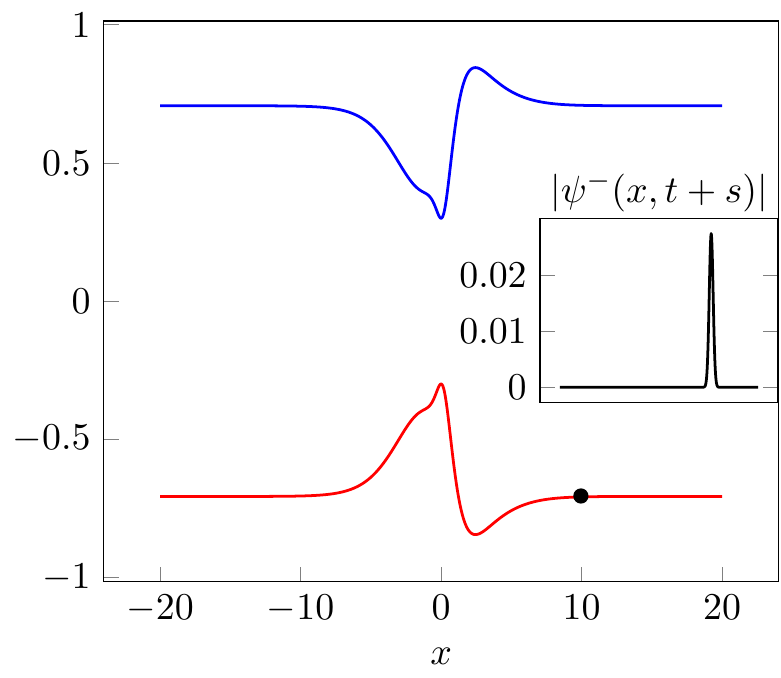}\label{fig:1DToyExampleFormulaD}}
		\caption{Application of the 1D formula for a particular system discussed in \cite{BetzGoddard09}. The centre of mass of the associated wavepacket (inset) is represented by a black point on either the upper (blue) and lower (red) adiabatic potential surfaces.}\label{fig:1DToyExampleFormula}
	\end{figure}
	Note that we have assumed that the avoided crossing is centred at $x=0$. When the avoided crossing is not centred at 0, an additional shift term $e^{\frac{i}{\varepsilon}(\eta(k)-k)x_0}$, where $x_0$ is the position of the avoided crossing, is obtained when changing variables in the Fourier transform of the coupling symbols~\cite{GoddardHurst18}.
	
	Applications of the one dimensional formula have been widely successful on a variety of examples. In \cite{BetzGoddardManthe16}, the formula is used to accurately approximate the transmitted wavepacket for sodium iodide. Tilted avoided crossings have also been examined, and a formula developed which is dependent on the value $n$, so the optimal superadiabatic representation must be calculated. The formula has also been sucessfully applied to model interference effects in multiple transitions \cite{BetzGoddard11}. 
	
	Finally, following the above derivation for reverse transitions (from lower to upper surface), the formula
	\begin{align}
	\widehat{\psi^+}^\varepsilon=-\frac{\tilde{\nu}+k}{2|\tilde{\nu}|}e^{-\tau_c|k-\tilde{\nu}|/(2\delta\varepsilon)}e^{-i\tau_r(k-\tilde{\nu})/(2\delta\varepsilon)}\widehat{\psi_0^+}^\varepsilon(\tilde{\nu}(k)),\label{eq:1DReverseFormula}
	\end{align}
	where $\tilde{\nu}=\text{sgn}(k)\sqrt{k^2+4\delta}$, can be used to approximate the wavepacket transmitted to the upper surface, far from the avoided crossing.
	
	\section{Coupling operators in higher dimensions}\label{Section:NDCouplingOperators}
	The first step in deriving \cref{eq:1DFormula} in \cite{BetzGoddardTeufeul09} was to approximate the superadiabatic coupling operators $K_{n+1}^\pm$. We now consider these operators in higher dimensions. We restrict the calculations here to two dimensions for clarity, but they can easily be adapted to $d$ dimensions.
	\begin{lemma}\label{Lemma:2DRecursions}
		In two dimensions, $\kappa_{n+1}^{\pm}$ is given by
		\begin{align}
		\kappa_{n+1}^\pm(\bm{p},\bm{q})=-2\rho(\bm{q})(x_{n+1}(\bm{p},\bm{q})\pm y_{n+1}(\bm{p},\bm{q})).\label{eq:KappaWithxy}
		\end{align}
		where $x_{n+1}(\bm{p},\bm{q}),y_{n+1}(\bm{p},\bm{q})$ are given by the following algebraic recursive differential equations:
		\begin{align}
		x_1=z_1=w_1=0,\quad y_1=-\frac{i}{4\rho}(p_1\partial_{q_1}\theta+p_2\partial_{q_2}\theta).\label{eq:xyzwStart}
		\end{align}
		and
		\begin{align}
		y_n=0,~n~\text{even},\quad x_n=z_n=w_n=0,~n~\text{odd}.
		\end{align}
		For $n$ odd, we have
		\begin{align}
		x_{n+1}=-\frac{1}{2\rho}\Biggl[\frac{1}{i}
		(\bm{p}\cdot\nabla_{\bm{q}}y_n)-2\sum_{j=1}^n\frac{1}{(2i)^jj!}\sum_{|\bm{\alpha}|=j}\partial_{\bm{p}}^{\bm{\alpha}}(b_{\bm{\alpha}} z_{n+1-j}-a_{\bm{\alpha}} x_{n+1-j})\Biggr],
		\end{align}
		and for $n$ even
		\begin{multline}
		y_{n+1}=-\frac{1}{2\rho}\Bigg[\frac{1}{i}((\bm{p}\cdot\nabla_{\bm{q}}x_n)-z_n(\bm{p}\cdot\nabla_{\bm{q}}\theta))\\
		-2\sum_{j=1}^n\frac{1}{(2i)^jj!}\sum_{|\bm{\alpha}|=j}\partial_{\bm{p}}^{\bm{\alpha}}(-a_{\bm{\alpha}} y_{n+1-j}+b_{\bm{\alpha}} w_{n+1-j})\Bigg],
		\end{multline}
		\begin{align}
		&\frac{1}{i}((\bm{p}\cdot\nabla_{\bm{q}}z_n)-x_n(\bm{p}\cdot\nabla_{\bm{q}}\theta))
		=
		\sum_{j=1}^n\frac{1}{(2i)^jj!}\sum_{|\bm{\alpha}|=j}\partial_{\bm{p}}^{\bm{\alpha}}(b_{\bm{\alpha}} y_{n+1-j}+a_{\bm{\alpha}} w_{n+1-j}),\\
		&\frac{1}{i}(\bm{p}\cdot\nabla_{\bm{q}}w_n)
		=
		2\sum_{j=1}^n\frac{1}{(2i)^jj!}\sum_{|\bm{\alpha}|=j}\partial_{\bm{p}}^{\bm{\alpha}}(a_{\bm{\alpha}} z_{n+1-j}+b_{\bm{\alpha}} x_{n+1-j}),\label{eq:xyzwEnd}
		\end{align}
		where $\bm{\alpha} = (\alpha_1,\alpha_2)$ and $\partial_{\bm{p}}^{\bm{\alpha}}=\partial_{p_1}^{\alpha_1}\partial_{p_2}^{\alpha_2}$, and $a_\alpha$ and $b_\alpha$ are given by the recursions
		\begin{gather*}
		a_0=\rho(q_1,q_2),\quad b_0=0,\nonumber\\
		a_{(\alpha_1+1,\alpha_2)}=\partial_{q_1}a_{(\alpha_1,\alpha_2)}+(\partial_{q_1}\theta)b_{(\alpha_1,\alpha_2)},\quad
		b_{(\alpha_1+1,\alpha_2)}=\partial_{q_1}b_{(\alpha_1,\alpha_2)}-(\partial_{q_1}\theta)a_{(\alpha_1,\alpha_2)},\nonumber\\
		a_{(\alpha_1,\alpha_2+1)}=\partial_{q_2}a_{(\alpha_1,\alpha_2)}+(\partial_{q_2}\theta)b_{(\alpha_1,\alpha_2)},\quad
		b_{(\alpha_1,\alpha_2+1)}=\partial_{q_2}b_{(\alpha_1,\alpha_2)}-(\partial_{q_2}\theta)a_{(\alpha_1,\alpha_2)}.
		\end{gather*}
	\end{lemma}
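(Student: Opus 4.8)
The plan is to carry the one-dimensional scheme of \cite{BetzGoddardTeufeul09} into the two-dimensional Weyl calculus, working throughout in the Pauli decomposition of the $2\times2$ symbol. First I would write the symbol of the order-$\varepsilon^n$ contribution to the (as yet unknown) superadiabatic Hamiltonian in the basis $\{I,\sigma_x,\sigma_y,\sigma_z\}$, with the two off-diagonal channels ($\sigma_x,\sigma_y$) recorded by $x_n,y_n$ and the two diagonal channels ($\sigma_z,I$) by $z_n,w_n$; matching the $(1,2)$ and $(2,1)$ entries then gives exactly $\kappa_{n+1}^\pm=-2\rho(x_{n+1}\pm y_{n+1})$. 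The leading diagonal symbol is $\rho\sigma_z$ (the adiabatic gap), and the seed is the adiabatic coupling, whose Weyl symbol is $\pm\tfrac i2(\bm p\cdot\nabla_{\bm q}\theta)$; this fixes the base case $x_1=z_1=w_1=0$, $y_1=-\tfrac{i}{4\rho}(\bm p\cdot\nabla_{\bm q}\theta)$, consistent with $\kappa_1^\pm$ in \cref{eq:AdiaTerms}.

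Next I would set up the two-dimensional Moyal product, the direct analogue of the expansion used in one dimension, with scalar derivatives replaced by the multi-indices $\bm\alpha=(\alpha_1,\alpha_2)$ and $\partial_{\bm p}^{\bm\alpha}=\partial_{p_1}^{\alpha_1}\partial_{p_2}^{\alpha_2}$. The superadiabatic condition is imposed order by order: one demands that the off-diagonal part of $U_n\star H\star U_n^{\ast}$ vanish through order $\varepsilon^n$ and chooses the order-$\varepsilon^n$ generator of $U_n$ to cancel the residual off-diagonal term. Because the leading term of the relevant Moyal bracket is the commutator with $\rho\sigma_z$, and $[\sigma_z,\sigma_x]$, $[\sigma_z,\sigma_y]$ reproduce the off-diagonal sector up to the factor $2\rho$, solving this algebraic equation for the generator divides by $2\rho$; this is the origin of the prefactor $-\tfrac1{2\rho}$ in the equations for $x_{n+1}$ and $y_{n+1}$.

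The two structural ingredients then fall out as follows. The transport operator $\tfrac1i(\bm p\cdot\nabla_{\bm q})$ is the leading (Poisson-bracket) contribution of the kinetic symbol $\tfrac12|\bm p|^2$, since $\{\tfrac12|\bm p|^2,g\}=\bm p\cdot\nabla_{\bm q}g$, while the higher Moyal corrections from composition with the potential symbol produce the sums $\sum_{j}\tfrac1{(2i)^jj!}\sum_{|\bm\alpha|=j}\partial_{\bm p}^{\bm\alpha}(\cdots)$. The coefficients $a_{\bm\alpha},b_{\bm\alpha}$ are precisely the repeated $\bm q$-derivatives of the potential symbol taken in the frame rotating with $\theta$: one checks that the stated recursion is equivalent to $a_{\bm\alpha}+ib_{\bm\alpha}=e^{i\theta}\partial_{\bm q}^{\bm\alpha}\!\big(e^{-i\theta}\rho\big)$, with $a_0=\rho$, $b_0=0$, which renders the coefficients manifestly symmetric in $\bm\alpha$ and hence well defined. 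Separating the four Pauli channels yields the four displayed equations, the $\sigma_x,\sigma_y$ channels giving $x_{n+1},y_{n+1}$ and the $\sigma_z,I$ channels giving the consistency relations for $z_n,w_n$. The parity statement follows by induction: the only nonzero seed lies in the $\sigma_y$ channel at the odd index $n=1$, and each application of the transport and Moyal operators flips the active sector between $\{y\}$ (odd $n$) and $\{x,z,w\}$ (even $n$).

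I expect the main obstacle to be the combinatorics of the multi-index Moyal expansion: in one dimension the derivative orders are scalars, whereas here one must track the interaction of $\partial_{p_1},\partial_{p_2}$ with $\partial_{q_1},\partial_{q_2}$ across the sum over $|\bm\alpha|=j$ and verify that the $\theta$-rotated bookkeeping for $a_{\bm\alpha},b_{\bm\alpha}$ reproduces every mixed $\bm q$-derivative of the potential consistently. Once the identity $a_{\bm\alpha}+ib_{\bm\alpha}=e^{i\theta}\partial_{\bm q}^{\bm\alpha}(e^{-i\theta}\rho)$ is established, equality of mixed partials guarantees that the recursion is independent of the order in which the two spatial derivatives are applied, and the remaining work is the routine order-by-order matching already carried out in the one-dimensional case.
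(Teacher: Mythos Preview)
Your proposal is correct and is exactly the approach the paper takes: the paper's entire proof is the single sentence ``The method is a straightforward extension of the theory in \cite{BetzGoddardTeufeul09}'', and what you have written is a faithful unpacking of that extension (Pauli decomposition of the $2\times2$ symbol, Moyal expansion with multi-indices replacing scalar derivative orders, order-by-order cancellation of the off-diagonal part against $\rho\sigma_z$, and the parity induction seeded by $y_1$). Your closed-form identity $a_{\bm\alpha}+ib_{\bm\alpha}=e^{i\theta}\partial_{\bm q}^{\bm\alpha}(e^{-i\theta}\rho)$ is a clean way to verify that the stated two-index recursion for $a_{\bm\alpha},b_{\bm\alpha}$ is well defined (independent of the order in which the two partial derivatives are taken), which the paper does not make explicit.
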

	\begin{proof}
		The method is a straightforward extension of the theory in \cite{BetzGoddardTeufeul09}.
	\end{proof}
	As in \cite{BetzGoddardTeufeul09}, by general theory, the coefficients $x_n,y_n,z_n,w_n$ are polynomials in $\bm{p}$ of order $n$. We therefore write
	\begin{align}
	x_n(\bm{p},\bm{q})=\sum_{m=0}^n\sum_{k=0}^mp_1^kp_2^{m-k}x_n^{k,m-k} (q_1,q_2),\label{eq:xyzwExpansion}
	\end{align}
	for some $x_n^{k,m-k}(q_1,q_2)$, and similarly for $y_n,z_n,w_n$. For a given $j$, we write $\bm{\alpha}_j = (\alpha,j-\alpha)$ for each $\alpha\le j$.
	Note that:
	\begin{align}
	\partial_{\bm{p}}^{\bm{\alpha}_j}p_1^kp_2^{m-k}&=
	\begin{cases}
	\frac{k!}{(m-k)!}\frac{(m-k)!}{(m-k-j+\alpha)!}p_1^{k-\alpha}p_2^{m-k-j+\alpha},& k\ge\alpha\text{ and }m\ge j,\\
	0,&\text{otherwise}
	\end{cases}
	\end{align}
	Then
	\begin{align*}
	\partial_{\bm{p}}^{\bm{\alpha}_j}x_{n+1-j}&=\sum_{m=0}^{n+1-j}\sum_{k=0}^m(\partial_{p_1}^\alpha p_1^k)\big(\partial_{p_2}^{j-\alpha}p_2^{m-k}\big)x_{n+1-j}^{k,m-k}(q_1,q_2),\\
	&= \sum_{m=j}^{n+1-j}\sum_{k=\alpha}^{m-\alpha+j}
	\frac{k!}{(k-\alpha)!}\frac{(m-k)!}{(m-k-j+\alpha)!}p_1^{k-\alpha}p_2^{m-k-j+\alpha}
	x_{n+1-j}^{k,m-k}(q_1,q_2),
	\end{align*}
	so that
	\begin{align}
	\mathcal{A}\coloneqq
	\sum_{j=1}^n\frac{1}{(2i)^jj!}\sum_{\alpha=0}^j\partial_{p_1}^\alpha\partial_{p_2}^{j-\alpha}a_{\bm{\alpha}_j}x_{n+1-j},\label{eq:derivativexn+1-j}
	\end{align}
	can be rewritten as
	\begin{align*}
	\sum_{j=1}^n\frac{1}{(2i)^jj!}\sum_{\alpha=0}^ja_{\bm{\alpha}_j}\sum_{m=j}^{n+1-j}\sum_{k=\alpha}^{m+\alpha-j}\frac{k!}{(k-\alpha)!}\frac{(m-k)!}{(m-k-j+\alpha)!}p_1^{k-\alpha}p_2^{m-k-j+\alpha}
	x_{n+1-j}^{k,m-k}.
	\end{align*}
	We now want to extract $p_1$ and $p_2$ from the final two summations, 
	so that we can compare coefficients on either side of the results of \cref{Lemma:2DRecursions} to construct recursive equations for $x_{n}^{A,B}$ for $A+B<n$. Consider terms where $j>\frac{n+1}{2}$. By the limits of the third summand, we find that $m>\frac{n+1}{2}$, and that $m<\frac{n+1}{2}$, a contradiction. Therefore we restrict the limits of first summand.
	Defining $b=k-\alpha$, and		
	$c=m-j$, we find
	\begin{align*}
	\mathcal{A}&=\sum_{j=1}^{\floor{\frac{n+1}{2}}}\sum_{\alpha=0}^j\sum_{c=0}^{n+1-2j}\sum_{b=0}^{c}\frac{a_{\bm{\alpha}_j}}{(2i)^jj!}\frac{(b+\alpha)!}{b!}\frac{((c+j)-(b+\alpha))!}{(c-b)!}p_1^{b}p_2^{c-b}
	x_{n+1-j}^{b+\alpha,(c+j)-(b+\alpha)}.
	\end{align*}
	We now want to switch the order of summation. We note that, for an arbitrary $\mathcal{B}$,
	\begin{align*}
	\sum_{j=1}^{\floor{\frac{n+1}{2}}}\sum_{c=0}^{n+1-2j}\mathcal{B}_{c,j}=\sum_{c=0}^{n+1}\sum_{j=1}^{\floor{\frac{c}{2}}}\mathcal{B}_{n+1-c,j},
	\end{align*}
	which can be shown directly (note that the terms where $c=0$, $c=1$ are zero). Using this, we finally have that
	\begin{multline}
	\mathcal{A}=\sum_{c=0}^{n+1}\sum_{b=0}^{n+1-c}p_1^{b}p_2^{n+1-c-b}\\
	\times \sum_{j=1}^{\floor{\frac{c}{2}}}
	\sum_{\alpha=0}^j\frac{a_{\bm{\alpha}_j}}{(2i)^jj!}\frac{(b+\alpha)!}{b!}\frac{(n+1-c+j-b-\alpha)!}{(n+1-c-b)!}
	x_{n+1-j}^{b+\alpha,(n+1-c+j)-(b+\alpha)}.\label{eq:ASum}
	\end{multline}
	Importantly, $p_1$ and $p_2$ have been extracted from two of the summations. Note that taking $b=0$ and $\alpha=0$, or $j-\alpha=0$ and $n+1-c-b=0$ in \cref{eq:ASum}, we return the 1D result in \cite{BetzGoddardTeufeul09} for $p_2$ and $p_1$ respectively. We obtain the following result.
	\begin{proposition}\label{Prop:2DCoeffs}
		The coefficients $x_n^{A,B}$ to $w_n^{A,B}$ are determined by the following algebraic-differential recursive equations. We have
		\begin{gather}
		x_1^{A,B}=z_1^{A,B}=w_1^{A,B}=0,\quad A+B\in \{0,1\},\\
		y_1^{0,0}=y_1^{1,1}=0,\quad y_1^{1,0}=-\frac{i}{4\rho}\partial_{q_1}\theta,\quad y_1^{0,1}=-\frac{i}{4\rho}\partial_{q_2}\theta.\label{eq:xyzwExpStart}
		\end{gather}
		Further,
		\begin{multline}
		x_{n+1}^{A,B}=-\frac{1}{2\rho}\Bigg[
		\frac{1}{i}(\partial_{q_1}y_n^{A-1,B}+\partial_{q_2}y_n^{A,B-1})
		-2\sum_{j=1}^{\floor{\frac{n+1-(A+B)}{2}}}\sum_{\alpha=0}^{j}\frac{1}{(2i)^jj!}\\
		\times\frac{(A+\alpha)!}{A!}
		\frac{(B+j-\alpha)!}{B!}
		\left(b_{\bm{\alpha}_j}z_{n+1-j}^{A+\alpha,B+j-\alpha}-a_{\bm{\alpha}_j}x_{n+1-j}^{A+\alpha,B+j-\alpha}\right)
		\Bigg],
		\end{multline}
		When $n$ is odd. When $n$ is even, we have
		\begin{multline}
		y_{n+1}^{A,B}=-\frac{1}{2\rho}\Bigg[
		\frac{1}{i}((\partial_{q_1}x_n^{A-1,B}+\partial_{q_2}x_n^{A,B-1})-(z_n^{A-1,B}\partial_{q_1}\theta+z_n^{A,B-1}\partial_{q_2}\theta))\\
		-2\sum_{j=1}^{\floor{\frac{n+1-(A+B)}{2}}}\sum_{\alpha=0}^j\frac{1}{(2i)^jj!}\frac{(A+\alpha)!}{A!}\frac{(B+j-\alpha)!}{B!}\\
		\times\left(-a_{\bm{\alpha}_j}y_{n+1-j}^{A+\alpha,B+j-\alpha}+b_{\bm{\alpha}_j}w_{n+1-j}^{A+\alpha,B+j-\alpha}\right)\Bigg],
		\end{multline}
		\begin{multline}
		0=\frac{1}{i}((\partial_{q_1}z_n^{A-1,B}+\partial_{q_2}z_n^{A,B-1})+(x_n^{A-1,B}\partial_{q_1}\theta+x_n^{A,B-1}\partial_{q_2}\theta))\\
		-2\sum_{j=1}^{\floor{\frac{n+1-(A+B)}{2}}}\sum_{\alpha=0}^j\frac{1}{(2i)^jj!}\frac{(A+\alpha)!}{A!}\frac{(B+j-\alpha)!}{B!}\\
		\times\left(b_{\bm{\alpha}_j}y_{n+1-j}^{A+\alpha,B+j-\alpha}+a_{\bm{\alpha}_j}w_{n+1-j}^{A+\alpha,B+j-\alpha}\right),
		\end{multline}
		\begin{multline}
		0=\frac{1}{i}((\partial_{q_1}w_n^{A-1,B}+\partial_{q_2}w_n^{A,B-1})\\
		-2\sum_{j=1}^{\floor{\frac{n+1-(A+B)}{2}}}\sum_{\alpha=0}^j\frac{1}{(2i)^jj!}\frac{(A+\alpha)!}{A!}\frac{(B+j-\alpha)!}{B!}\\
		\times\left(a_{\bm{\alpha}_j}z_{n+1-j}^{A+\alpha,B+j-\alpha}+b_{\bm{\alpha}_j}x_{n+1-j}^{A+\alpha,B+j-\alpha}\right).\label{eq:xyzwExpEnd}
		\end{multline}
	\end{proposition}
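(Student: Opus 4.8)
The plan is to substitute the polynomial ansatz \cref{eq:xyzwExpansion} for each of $x_n,y_n,z_n,w_n$ into the four recursive identities of \cref{Lemma:2DRecursions} and then compare the coefficients of each monomial $p_1^A p_2^B$ on the two sides. After the substitution every term is a polynomial in $\bm p$ with $q$-dependent coefficients, so this comparison is legitimate and produces one scalar algebraic-differential recursion per monomial. The base cases are immediate from \cref{eq:xyzwStart}: the conditions $x_1=z_1=w_1=0$ force $x_1^{A,B}=z_1^{A,B}=w_1^{A,B}=0$ for every $A,B$, while reading off the two monomials of $y_1=-\tfrac{i}{4\rho}(p_1\partial_{q_1}\theta+p_2\partial_{q_2}\theta)$ gives $y_1^{1,0}=-\tfrac{i}{4\rho}\partial_{q_1}\theta$, $y_1^{0,1}=-\tfrac{i}{4\rho}\partial_{q_2}\theta$ and $y_1^{0,0}=y_1^{1,1}=0$.

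First I would dispose of the transport-type terms. Inserting the expansion into $\bm p\cdot\nabla_{\bm q}y_n=p_1\partial_{q_1}y_n+p_2\partial_{q_2}y_n$ and matching $p_1^A p_2^B$ merely shifts the polynomial degree by one in the relevant variable, contributing $\partial_{q_1}y_n^{A-1,B}+\partial_{q_2}y_n^{A,B-1}$, where coefficients carrying a negative index are read as zero so that the boundary monomials are handled automatically. The coupling terms such as $z_n(\bm p\cdot\nabla_{\bm q}\theta)$ are treated identically and yield $z_n^{A-1,B}\partial_{q_1}\theta+z_n^{A,B-1}\partial_{q_2}\theta$, entering with the signs dictated by \cref{Lemma:2DRecursions}. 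These account for all the terms in the four equations of the proposition that are not double sums.

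The substantive step is the convolution-type double sum, which in each of the four identities has the common shape of the quantity $\mathcal A$ in \cref{eq:derivativexn+1-j}, with the pair $(a_{\bm\alpha_j},x)$ replaced in turn by $(b_{\bm\alpha_j},z)$, $(a_{\bm\alpha_j},y)$, $(b_{\bm\alpha_j},w)$, and so on. This is exactly the object reduced to the closed form \cref{eq:ASum} in the discussion preceding the proposition, so I would invoke that computation directly rather than repeating it. Extracting the coefficient of $p_1^A p_2^B$ from \cref{eq:ASum} amounts to setting $b=A$ and $n+1-c-b=B$, i.e.\ $c=n+1-A-B$; the upper limit of the inner sum then becomes $\floor{c/2}=\floor{(n+1-A-B)/2}$, the factorial prefactor collapses to $\tfrac{(A+\alpha)!}{A!}\tfrac{(B+j-\alpha)!}{B!}$, and the summand index becomes $x_{n+1-j}^{A+\alpha,B+j-\alpha}$, precisely as displayed. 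Since the four sums differ only in which of $a_{\bm\alpha_j},b_{\bm\alpha_j}$ multiplies which of $x,y,z,w$, the same extraction applies verbatim to each; collecting the transport terms onto one side and retaining the prefactors of \cref{Lemma:2DRecursions} then gives the four recursions.

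The main obstacle lies entirely in the combinatorial bookkeeping already carried out to obtain \cref{eq:ASum}: correctly tracking the factorials produced by $\partial_{\bm p}^{\bm\alpha_j}$ acting on $p_1^k p_2^{m-k}$, justifying the truncation of the $j$-sum to $j\le\floor{(n+1)/2}$ via the emptiness of the ranges when $j>\tfrac{n+1}{2}$, and performing the reindexing $b=k-\alpha$, $c=m-j$ together with the interchange of the $j$- and $c$-summations. Once that reduction is granted, the remainder is a mechanical coefficient comparison. The only additional check is that the parity pattern of \cref{Lemma:2DRecursions} ($y_n=0$ for $n$ even, $x_n=z_n=w_n=0$ for $n$ odd) is consistent with the coefficient recursions, which holds because the parity conventions ensure that only index combinations $n+1-j$ of the correct parity contribute a nonzero term on each right-hand side.
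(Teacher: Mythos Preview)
Your proposal is correct and follows essentially the same approach as the paper's own proof: substitute the polynomial expansion \cref{eq:xyzwExpansion} into the recursions of \cref{Lemma:2DRecursions}, invoke the pre-computed reduction \cref{eq:ASum} to handle the double sums, and compare coefficients of $p_1^A p_2^B$. The paper's proof is a one-line sketch of exactly this procedure, and your write-up simply fleshes out the details.
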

	\begin{proof}
		We substitute \cref{eq:xyzwExpansion} into the results of \cref{Lemma:2DRecursions} and compare coefficients in powers of $p_1,p_2$ on either side, using \cref{eq:ASum}.
	\end{proof}
	As with the coefficients $x_n$ and $y_n$ in \cref{eq:KappaWithxy}, $\kappa_{n+1}^\pm$ has polynomial form:
	\begin{align}
	\kappa_{n+1}^{\pm}(\bm{p},\bm{q})=\sum_{m=0}^n\sum_{j=0}^mp_1^jp_2^{m-j}\kappa_{n+1}^{(j,m-j)\pm} (q_1,q_2).\label{eq:KappaExpansion}
	\end{align}
	The coupling operators only act on wavepackets near the avoided crossing, so when considering the effect of the coupling operator on a wavepacket we assume that the path of the wavepacket is approximately linear. Furthermore, by a change of coordinate system, we may assume that the direction of travel of the wavepacket is independent of $p_2$ (i.e.\ we rotate the frame of reference so that the wavepacket is moving in the $p_1$ direction). We also assume that the leading order term in $p_1$ dominates $\kappa_{n+1}^\pm$: $\kappa_{n+1}^{\pm}\approx p_1^n\kappa_{n+1}^{(n,0)\pm}(q_1,q_2)$. This is similar to the assumption made in the one dimensional case, where it can be shown to be accurate for sufficiently large $p$, but in practice holds for much smaller values. Then the 2D algebraic differential recursive equations then reduce to the one dimensional case in \cite{BetzGoddardTeufeul09}:
	\begin{gather}
	x^{n+1,0}_{n+1}\approx\frac{i}{2\rho}(\partial_{q_1}y^{n,0}_n),\nonumber \\ y^{n+1,0}_{n+1}\approx\frac{i}{2\rho}((\partial_{q_1}x^{n,0}_n)'-(\partial_{q_1}\theta) z_n^{n,0}),\quad 0\approx\partial_{q_1}z_n^{n,0}+(\partial_{q_1}\theta) x_n^{n,0}.\label{eq:xyzwLeading}
	\end{gather}
	To ease notation, redefine $x_{n+1}=x_{n+1}^{n+1,0}$, and similar for $y_{n+1},z_{n+1}$. It is unclear what the analogue of \cref{eq:1DTau}, introduced initially in \cite{BerryLim93} for the time-adiabatic case, would be for multidimensional systems. We introduce the natural scaling in the first dimension
	\begin{align}
	\tau(q_1,q_2)=2\int_0^{q_1}\rho(r,q_2)\diff r. \label{eq:nDTauApprox1D}
	\end{align}
	Defining $\tilde{f}(\tau(q_1,q_2))=f(q_1,q_2)$ the recursive relations \cref{eq:xyzwLeading} then become
	\begin{align}
	\tilde{x}_{n+1}^0=i\tilde{y}_{n+1}^0,\quad \tilde{y}_{n+1}^0=i((\tilde{x}_{n}^0)'+\tilde{\theta}'\tilde{z}_n^0),\quad
	0=(\tilde{z}_n^0)'+\tilde{\theta}'\tilde{x}_n^0,\label{eq:xyzwTau}
	\end{align}
	where $\tilde{\theta}'=\frac{d}{d\tau(q_1,q_2)}\tilde{\theta}$. These recursive equations also occur in \cite{BetzTeufel05-1}, where they are solved in one dimension, under the assumption that
	\begin{align}
	\frac{d}{d\tau}\tilde{\theta}(\tau)&=\frac{i\gamma}{\tau-\bar{\tau}^{cz}}-\frac{i\gamma}{\tau-\tau^{cz}}+\tilde{\theta}_r'(\tau),\label{eq:ThetaAssumption}
	\end{align}
	where $\tau^{cz}$ is a first order complex singularity of $\tilde{\theta}$, and $\tilde{\theta}_r$ has no singularities closer to the real axis than $\tau^{cz}$. Assuming that the avoided crossing occurs at $0$, we can write $\rho^2(q)=\delta^2+g(q)^2$, for some analytic function $g$ such that $g(0)\approx0$, and $g^2$ is quadratic in the neighbourhood of $q=0$. Therefore a Stokes line ({\em i.e.}\ a curve with Im$(\rho)=0$) crosses the real axis perpendicularly \cite{JoyeMiletiPfister91}, and following this line leads to a pair of complex conjugate points $q^{cz},\bar{q}^{cz}$ which are complex zeros of $\rho$. Defining $\tau^{cz}=\tau(q^{cz})$, it is shown in \cite{BerryLim93} that first order complex singularities of the adiabatic coupling function arise at these complex zeros. This derivation is still valid in our case, for each $q_2$. The recursive algebraic differential equations solved in \cite{BetzTeufel05-1} then give us $\kappa_{n,0}^-$ to leading order:
	\begin{align}
	\kappa_{n}^-(\bm{q})\approx\frac{i^n}{\pi}\rho(\bm{q})(n-1)!\left(\frac{i}{(\tau(\bm{q})-\bar{\tau}^{cz}(q_2))^n}-\frac{i}{(\tau(\bm{q})-\tau^{cz}(q_2))^{n}}\right).\label{eq:Kappan0}
	\end{align}
	It is clear that the results of this section can be extended to higher dimensions, by assuming the direction of travel of the wavepacket is in the first dimension. We will now use this observation to design an algorithm for multi-dimensional transitions using only the 1D transition formula.
	\section{The multi-dimensional algorithm}\label{Section:SlicingAlgorithm}
	The derivation of a multidimensional formula, under the assumptions above, follow similarly to the one dimensional case. First we define the multidimensional Weyl quantization \cite{Bach02}.
	\begin{definition}
		For a symbol $H(\varepsilon,\bm{p},\bm{q})$, given a test function $\psi$, we define the {\em Weyl quantization} of $H$ by
		\begin{align}
		(\mathcal{W}_\varepsilon H\psi)(\bm{x})=\frac{1}{(2\pi\varepsilon)^{d}}\int_{\mathbb{R}^{2d}}\diff \bm{\xi}\diff \bm{y} H(\varepsilon,\bm{\xi},\frac{1}{2}(\bm{x}+\bm{y}))e^{\frac{i}{\varepsilon}(\bm{\xi}\cdot(\bm{x}-\bm{y}))}\psi(\bm{y}).\label{eq:nDWeyl}
		\end{align}
	\end{definition}
	We want to approximate the pseudodifferential operator $K_n$, which is given by the Weyl quantisation of $\kappa_n$. The particular form of $\kappa_n$ allows us to simplify the Weyl quantisation as follows.  
	\begin{proposition}\label{Proposition:nDWeylKappa}
		Let $\kappa(\bm{p},\bm{q})=g(\bm{q})\prod_{i=1}^dp_i^{A_i}$, for $A_i\in\mathbb{N}$. Then
		\begin{align}
		\widehat{(\mathcal{W}_\varepsilon\kappa)(\psi)}^\varepsilon(\bm{k})=\frac{1}{(2\pi\varepsilon)^{d/2}}\int_{\mathbb{R}^d}
		\widehat{g}^\varepsilon(\bm{k}-\bm{\eta})
		\prod_{i=1}^d\left(\frac{k_i+\eta_i}{2}\right)^{A_i}\widehat{\psi}^\varepsilon(\bm{\eta})\diff \bm{\eta}.\label{eq:WeylKappaResult}
		\end{align}
	\end{proposition}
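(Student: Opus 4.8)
The plan is to reduce the definition \cref{eq:nDWeyl} to an integral over momentum space by inserting the Fourier representations of $\psi$ and $g$ and then collapsing the resulting oscillatory integrals with Dirac deltas. First I would substitute the symbol $\kappa(\bm{\xi},\tfrac12(\bm{x}+\bm{y}))=g(\tfrac12(\bm{x}+\bm{y}))\prod_{i=1}^d\xi_i^{A_i}$ into \cref{eq:nDWeyl} and apply the $\varepsilon$-scaled Fourier transform in $\bm{x}$ to $(\mathcal{W}_\varepsilon\kappa)\psi$. This yields a $3d$-fold integral over $\bm{x},\bm{y},\bm{\xi}$ whose only $\bm{x},\bm{y}$ dependence sits in the factor $g(\tfrac12(\bm{x}+\bm{y}))$ and in the two exponentials $e^{\frac{i}{\varepsilon}\bm{\xi}\cdot(\bm{x}-\bm{y})}$ and $e^{-\frac{i}{\varepsilon}\bm{k}\cdot\bm{x}}$; the monomial $\prod_i\xi_i^{A_i}$ depends on $\bm{\xi}$ alone and simply rides along as a coefficient.

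Next I would rewrite both $\psi(\bm{y})$ and $g(\tfrac12(\bm{x}+\bm{y}))$ via the inverse $\varepsilon$-scaled Fourier transform, introducing a momentum $\bm{\eta}$ dual to $\bm{y}$ and a momentum $\bm{\zeta}$ dual to the midpoint $\tfrac12(\bm{x}+\bm{y})$. After this step every occurrence of $\bm{x}$ and $\bm{y}$ appears linearly in an exponent, so the $\bm{x}$- and $\bm{y}$-integrals can each be evaluated to a factor $(2\pi\varepsilon)^d$ times a Dirac delta: the $\bm{x}$-integration produces $\delta(\tfrac{\bm{\zeta}}{2}+\bm{\xi}-\bm{k})$ and the $\bm{y}$-integration produces $\delta(\tfrac{\bm{\zeta}}{2}-\bm{\xi}+\bm{\eta})$. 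Integrating out $\bm{\xi}$ and then $\bm{\zeta}$ against these deltas forces $\bm{\zeta}=\bm{k}-\bm{\eta}$ and, crucially, $\bm{\xi}=\tfrac12(\bm{k}+\bm{\eta})$; the symmetric average is exactly what one expects, since $g$ is sampled at the midpoint $\tfrac12(\bm{x}+\bm{y})$, which is the defining feature of the Weyl (as opposed to a left- or right-) quantisation. Substituting $\xi_i=\tfrac12(k_i+\eta_i)$ into the surviving monomial and $\widehat{g}^\varepsilon(\bm{\zeta})=\widehat{g}^\varepsilon(\bm{k}-\bm{\eta})$ into the integrand leaves precisely the claimed $\bm{\eta}$-integral, and a bookkeeping of prefactors—$(2\pi\varepsilon)^{-5d/2}$ accumulated from the quantisation prefactor together with the three $\varepsilon$-scaled transforms, offset by $(2\pi\varepsilon)^{2d}$ from the two deltas—collapses the overall constant to $(2\pi\varepsilon)^{-d/2}$, matching \cref{eq:WeylKappaResult}.

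The main obstacle is rigour rather than algebra: the computation is a manipulation of oscillatory integrals in which the $\bm{x}$- and $\bm{y}$-integrals do not converge absolutely and the Dirac deltas are only meaningful distributionally. I would make this precise by first establishing the identity for Schwartz-class $\psi$ and $g$, where Fubini and Fourier inversion apply and the delta-function collapse is nothing more than iterated Fourier transformation, and then extending to the symbols at hand by density. As an alternative that sidesteps the monomial-weighted $\bm{\xi}$-integral entirely, one may exploit that $\kappa$ is polynomial in $\bm{p}$: replacing $\prod_i\xi_i^{A_i}e^{\frac{i}{\varepsilon}\bm{\xi}\cdot(\bm{x}-\bm{y})}$ by $\prod_i(\tfrac{\varepsilon}{i}\partial_{x_i})^{A_i}$ acting on the exponential and integrating by parts turns $\mathcal{W}_\varepsilon\kappa$ into an explicit differential operator, after which the $\bm{\xi}$-integral is an ordinary Fourier inversion and the midpoint average $\tfrac12(k_i+\eta_i)$ emerges from the product rule applied to $g(\tfrac12(\bm{x}+\bm{y}))$ and the oscillatory factor.
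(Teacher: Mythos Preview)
Your proposal is correct and follows the same Fourier-analytic route as the paper's proof. The only minor difference is tactical: rather than Fourier-expanding $g$ and introducing a second momentum variable $\bm{\zeta}$, the paper passes to midpoint coordinates $\tilde{\bm{y}}=\tfrac12(\bm{x}+\bm{y})$ so that the $\tilde{\bm{y}}$-integral directly yields $\widehat{g}^\varepsilon$, then rescales $\tilde{\bm{\xi}}=2\bm{\xi}$ and collapses the final $\bm{x}$-Fourier transform with a single delta---your two-delta version and prefactor bookkeeping are equivalent.
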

	\begin{proof}
		Firstly, using that $\psi(\bm{y}) = (2\pi\varepsilon)^{-d/2} \int_{\mathbb{R}^d} \diff \bm{\eta} \widehat{\psi}^\varepsilon(\bm{\eta}) \exp(i (\bm{\eta}\cdot\bm{y})/\varepsilon)$,
		\begin{align*}
		\mathcal{W}_\varepsilon\kappa\psi(\bm{x})=&\frac{1}{(2\pi\varepsilon)^d}\int_{\mathbb{R}^{2d}}\diff \bm{\xi}\diff \bm{y}
		\left(\prod_{i=1}^d\xi_i^{A_i}\right)
		g\left(\frac{\bm{x}+\bm{y}}{2}\right)e^{\frac{i}{\varepsilon}(\bm{\xi}\cdot(\bm{x}-\bm{y}))}\psi(\bm{y}),\\
		=&\frac{1}{(2\pi\varepsilon)^{3d/2}}\int_{\mathbb{R}^{3d}}d\bm{\xi}\diff \bm{y} \diff\bm{\eta}
		\left(\prod_{i=1}^d\xi_i^{A_i}\right)
		g\left(\frac{\bm{x}+\bm{y}}{2}\right)e^{\frac{i}{\varepsilon}(\bm{\xi}\cdot(\bm{x}-\bm{y})+\bm{\eta}\cdot\bm{y})}\widehat{\psi}^\varepsilon(\bm{\eta}).
		\end{align*}
		Now define $\tilde{y}_i= (x_i+y_i)/2$, $i=1,...,d$. Then
		\begin{align*}
		\mathcal{W}_\varepsilon\kappa\psi(\bm{x})&=\frac{2^d}{(2\pi\varepsilon)^{3d/2}}\int_{\mathbb{R}^{3d}}\diff \bm{\xi} \diff \bm{\tilde y}\diff \bm{\eta}
		\left(\prod_{i=1}^d\xi_i^{A_i}\right)
		g(\tilde{\bm{y}})
		e^{\frac{i}{\varepsilon}(\bm{\xi}\cdot\bm{x}+(2\tilde{\bm{y}}-\bm{x})\cdot(\bm{\eta}-\bm{\xi}))}\psi(\bm{\eta}),\\
		&=\frac{2^d}{(2\pi\varepsilon)^d}\int_{\mathbb{R}^{2d}}\diff \bm{\xi}\diff \bm{\eta}
		\left(\prod_{i=1}^d\xi_i^{A_i}\right)
		e^{\frac{i}{\varepsilon}(\bm{x}\cdot(2\bm{\xi}-\bm{\eta}))}\psi(\bm{\eta})
		\hat{g}^\varepsilon(2(\bm{\xi}-\bm{\eta})).
		\end{align*}
		We perform a second change of variables
		$\tilde{\xi}_i=2\xi_i$
		and find
		\begin{align*}
		\mathcal{W}_\varepsilon\kappa\psi(\bm{x})&=\frac{1}{(2\pi\varepsilon)^d}\int_{\mathbb{R}^{2d}}\diff \tilde{\bm{\xi}}\diff \bm{\eta}
		\left(\prod_{i=1}^d\left(\frac{\tilde{\xi}_i}{2}\right)^{A_i}\right)
		e^{\frac{i}{\varepsilon}(\bm{x}\cdot(\tilde{\bm{\xi}}-\bm{\eta}))}\hat{\psi}^\varepsilon(\bm{\eta})\hat{g}^\varepsilon(\tilde{\bm{\xi}}-2\bm{\eta}).
		\end{align*}
		We apply the scaled Fourier transform to both sides of this equation:
		\begin{align*}
		\widehat{\mathcal{W}_\varepsilon\kappa\psi}^\varepsilon(\bm{k})=\frac{1}{(2\pi\varepsilon)^{3d/2}}\int_{\mathbb{R}^{3d}}& \diff \bm{\tilde \xi} \diff \bm{\eta} \diff \bm{x} \left(\prod_{i=1}^d\left(\frac{\tilde{\xi}_i}{2}\right)^{A_i}\right)
		e^{\frac{i}{\varepsilon}(\bm{x}\cdot(\tilde{\bm{\xi}}-\bm{\eta}-\bm{k}))}\hat{\psi}^\varepsilon(\bm{\eta})\hat{g}^\varepsilon(\bm{\tilde \xi}-2\bm{\eta}).
		\end{align*}
		Using that $(2\pi \varepsilon)^{-d} \int \diff \bm{x} \exp(i (\bm{a}\cdot \bm{x})/\varepsilon)= \delta(\bm{a})$ allows us to directly
		compute the $\bm{x}$ integral, giving \cref{eq:WeylKappaResult}.
	\end{proof}
	Next we linearise the dynamics near the avoided crossing. To leading order the uncoupled propagators in \ref{eq:AdiaTerms} can be approximated by 
	$
	H^\pm_1=-\frac{\varepsilon^2}{2}\nabla_{\bm{x}}^2\pm\delta+\bm{\lambda}\cdot\bm{x}
	$.
	Then, by the fundamental theorem of calculus,
	\begin{align}
	e^{\frac{i}{\varepsilon}sH^\pm}-e^{\frac{i}{\varepsilon}sH_1^\pm}&=e^{-\frac{i}{\varepsilon}sH_1^\pm}\int_0^se^{\frac{i}{\varepsilon}rH_1^\pm}\left[\frac{i}{\varepsilon}(H_1^\pm-H^\pm)\right]e^{-\frac{i}{\varepsilon}rH^\pm}\diff r.\label{eq:LineariseIntegrals}
	\end{align}
	Since $H_1^\pm-H^\pm$ is quadratic near zero, the integrand in \cref{eq:LineariseIntegrals} is of order 1 in an $\sqrt{\varepsilon}$-neighbourhood of zero. Outside of this region the coupling function provides a negligible result, as seen in the one dimensional case \cite{BetzGoddardTeufeul09}. We also use the $d$ dimensional Avron-Herbst formula \cite{AvronHerbst77}, which shows that
	\begin{align}
	e^{-\frac{i}{\varepsilon}s\widehat{H_1^\pm}\varepsilon}&=e^{-\frac{i\|\bm{\lambda}\|^2s^3}{6\varepsilon}}e^{s(\bm{\lambda}\cdot\partial_{\bm{k}})}e^{-\frac{i}{2\varepsilon}((\|\bm{k}\|^2\pm2\delta)s-(\bm{\lambda}\cdot\bm{k})s^2)}.\label{eq:nDAvronHerbst}
	\end{align}
	Then
	\begin{align}
	\widehat{\psi^-_n}^\varepsilon(\bm{k},t)&\approx
	-i\varepsilon^ne^{-\frac{i}{\varepsilon}t\widehat{H^-}^\varepsilon}
	\int_{-\infty}^{t}
	e^{-\frac{i\|\bm{\lambda}\|^2s^3}{6\varepsilon}}e^{s(\bm{\lambda}\cdot\partial_{\bm{k}})}e^{-\frac{i}{2\varepsilon}((\|\bm{k}\|^2-2\delta)s-(\bm{\lambda}\cdot\bm{k})s^2)}
	\widehat{K_{n+1}^-}^\varepsilon\nonumber\\
	&\qquad\qquad\qquad\qquad \times e^{-\frac{i\|\bm{\lambda}\|^2s^3}{6\varepsilon}}e^{s(\bm{\lambda}\cdot\partial_{\bm{k}})}e^{-\frac{i}{2\varepsilon}((\|\bm{k}\|^2+2\delta)s-(\bm{\lambda}\cdot\bm{k})s^2)}
	\widehat{\phi_0^+}^\varepsilon(\bm{k})\diff s.
	\end{align}
	Using \cref{Proposition:nDWeylKappa} for the coupling function shows that
	\begin{multline*}
	\widehat{\psi^-_n}^\varepsilon(\bm{k},t)\approx-i\frac{\varepsilon^n}{(2\pi\varepsilon)^{d/2}}e^{-\frac{i}{\varepsilon}t\widehat{H^-}^\varepsilon}\int_{-\infty}^{t}\diff s
	e^{-\frac{i\|\bm{\lambda}\|^2s^3}{6\varepsilon}}e^{s(\bm{\lambda}\cdot\partial_{\bm{k}})}e^{-\frac{i}{2\varepsilon}((\|\bm{k}\|^2-2\delta)s-(\bm{\lambda}\cdot\bm{k})s^2)}
	\\
	\times\int_{\mathbb{R}^d}\diff \bm{\eta}
	\left\{\sum_{A_i=1,i=1,..,d}^{n+1}\widehat{\kappa_{n+1}^{\bm{A},-}}^\varepsilon(\bm{k}-\bm{\eta})
	\left(\prod_{i=1}^d\left(\frac{k_i+\eta_i}{2}\right)^{A_i}\right)
	\right\}\\
	\times e^{-\frac{i(\|\bm{\lambda}\|^2s^3}{6\varepsilon}}e^{s(\bm{\lambda}\cdot\partial_{\bm{\eta}})}e^{-\frac{i}{2\varepsilon}((\|\bm{\eta}\|^2+2\delta)s-(\bm{\lambda}\cdot\bm{\eta})s^2)}\widehat{\phi_0^+}^\varepsilon(\bm{\eta_1}),
	\end{multline*}
	where $\bm{A}=(A_1...A_d)$. The operator $e^{s\bm{\lambda}\cdot\partial_{\bm{k}}}$ is a {\em shift operator}, so $e^{s\bm{\lambda}\cdot\partial_{\bm{k}}}f(\bm{k})=f(\bm{k}+\bm{\lambda} s)$. Instead of applying the shift operator to the right, we use the fact that the integral is invariant under the transform $\bm{\eta}\mapsto\bm{\eta}-\bm{\lambda} s$ to apply it to the left: in this case $f(\bm{\eta})e^{-s\bm{\lambda}\cdot\partial_{\bm{\eta}}}=f(\bm{\eta}-\bm{\lambda} s)$. The following transformations take place in the integrand:
	\begin{gather*}
	\widehat{\kappa_{n+1}^{\bm{A},-}}^\varepsilon(\bm{k}-\bm{\eta})\mapsto \widehat{\kappa_{n+1}^{\bm{A},-}}^\varepsilon(\bm{k}-\bm{\eta}),\quad
	\bm{k}+\bm{\eta}\mapsto \bm{k}+\bm{\eta}-2\bm{\lambda} s,\\
	e^{\frac{i}{2\varepsilon}((\|\bm{k}\|^2\pm2\delta)s-(\bm{\lambda}\cdot\bm{k})s^2)}\mapsto
	e^{\frac{i}{2\varepsilon}((\|\bm{k}-\bm{\lambda} s\|^2\pm2\delta)s-(\bm{\lambda}\cdot(\bm{k}-\bm{\lambda} s))s^2)}.
	\end{gather*}
	Rearranging gives
	\begin{multline}
	\widehat{\psi^-_n}^\varepsilon(\bm{k},t)\approx
	-i\frac{\varepsilon^n}{(2\pi\varepsilon)^{d/2}}e^{-\frac{i}{\varepsilon}t\widehat{H^-}^\varepsilon}
	\\
	\times\int_{-\infty}^{t}\int_{\mathbb{R}^d}\diff s\diff \bm{\eta}
	\left\{
	\sum_{A,B=1}^{n+1}\widehat{\kappa_{n+1}^{\bm{A},-}}^\varepsilon(\bm{k}-\bm{\eta})
	\left(\prod_{i=1}^d\left(\frac{k_i+\eta_i-2\lambda_i s}{2}\right)^{A_i}\right)
	\right\}
	\\
	\times\widehat{\phi_0^+}^\varepsilon(\bm{\eta})
	\exp\left\{\frac{i}{2\varepsilon}\left[(\|\bm{k}\|^2-\|\bm{\eta}\|^2-4\delta)s-(\bm{\lambda}\cdot(\bm{k}-\bm{\eta}))s^2\right] \right\}.\label{eq:psiTestminuseq}
	\end{multline}	
	We approximate $\kappa_{n+1}^-$ with \cref{eq:Kappan0}, then calculate the scaled Fourier transform:
	\begin{align*}
	&\widehat{\kappa_{n}^-}^\varepsilon(\bm{k})=\frac{1}{(2\pi\varepsilon)^{d/2}}\int_{\mathbb{R}^d}e^{-(i/\varepsilon)\bm{k}\cdot\bm{q}}\kappa_{n}^-(\bm{q})\diff \bm{q},\\
	&\approx\frac{(n-1)!}{(2\pi\varepsilon)^{d/2}}\frac{i^n}{\pi}\int_{\mathbb{R}^d}\rho(\bm{q})\left[\frac{i}{(\tau(\bm{q})-\bar{\tau}^{cz}(\bm{q}^{d-1}))^n}-\frac{i}{(\tau(\bm{q})-\tau^{cz}(\bm{q}^{d-1}))^n}\right]e^{-(i/\varepsilon)\bm{k}\cdot\bm{q}}\diff \bm{q},
	\end{align*}
	where $\bm{q}^{d-1}=(q_2,...,q_d)$. We now assume that $\rho$ does not vary significantly in any dimension other than the first. Then $\rho(\bm{q})\approx\rho(q_1)$, and consequently $\tau(\bm{q})=\tau(q_1),\tau^{cz}(\bm{q}^{d-1})=\tau^{cz}$. Therefore the Fourier transform in all other dimensions is given by $\frac{1}{\sqrt{2\pi\varepsilon}}\int_{-\infty}^{\infty}e^{-\frac{ikx}{\varepsilon}}\diff x=\sqrt{2\pi\varepsilon}\delta(k)$.
	As $\tau(\bm{q})\approx\tau(q_1)$, we only need to consider the one dimensional case. This is discussed in \cite{BetzGoddardTeufeul09}. A simple extension to $d$ dimensions therefore shows that
	\begin{align}
	\widehat{\kappa_{n,0}^-}^\varepsilon(\bm{k})&=
	\frac{i}{\sqrt{2\pi\varepsilon}}
	\left(\frac{k_1}{2\delta\varepsilon}\right)^{n-1}
	e^{-i\tau_r\frac{k_1}{2\delta\varepsilon}}
	e^{-\tau_c\frac{|k_1|}{2\delta\varepsilon}}
	\sqrt{2\pi\varepsilon}^{(d-1)}
	\delta(k_2,...,k_d).\label{eq:kappaHat}
	\end{align}
	We insert \cref{eq:kappaHat} into \cref{eq:psiTestminuseq}, and rearrange to find
	\begin{multline}
	\widehat{\psi_n^-}(\bm{k},t)=
	\frac{1}{4\pi\varepsilon}e^{-\frac{i}{\varepsilon}t\widehat{H^-}^\varepsilon}
	\int_0^\infty \diff s
	\int_{\mathbb{R}} \diff \eta_1
	\left(\frac{k_1^2-\eta_1^2}{4\delta}\right)^n\left(1-\frac{2\lambda_1 s}{k_1+\eta_1}\right)^{n+1}\\
	\times e^{-\frac{i\tau_r(k_1-\eta_1)}{2\delta\varepsilon}}e^{-\frac{\tau_c(|k_1-\eta_1|)}{2\delta\varepsilon}}\nonumber\\
	\times\left\{
	\int_{\mathbb{R}^{d-1}}\diff \eta_2...\diff \eta_d
	\widehat{\phi_0^+}^\varepsilon(\bm{\eta})e^{\frac{i}{2\varepsilon}[(\|\bm{k}\|^2-\|\bm{\eta}\|^2)s-\bm{\lambda}\cdot(\bm{k}-\bm{\eta})s^2]}
	\delta(k_2-\eta_2,...,k_d-\eta_d)
	\right\}.
	\end{multline}	
	By the identity $f(x)=\int_{-\infty}^{\infty}\delta(x-a)f(a)\diff a$,
	the integral in the dimensions $2,...,d$ can be evaluated to find
	\begin{multline}
	\widehat{\psi_n^-}(\bm{k},t)=
	\frac{1}{4\pi\varepsilon}e^{-\frac{i}{\varepsilon}t\widehat{H^-}^\varepsilon}
	\int_0^\infty \diff s
	\int_{\mathbb{R}} \diff \eta_1
	\left(\frac{k_1^2-\eta_1^2}{4\delta}\right)^n\left(1-\frac{2\lambda_1 s}{k_1+\eta_1}\right)^{n+1}\\
	\times e^{-\frac{i\tau_r(k_1-\eta_1)}{2\delta\varepsilon}}e^{-\frac{\tau_c(|k_1-\eta_1|)}{2\delta\varepsilon}}\nonumber\\
	\times\widehat{\phi_0^+}^\varepsilon(\eta_1,k_2,...,k_d)e^{\frac{i}{2\varepsilon}[(|k_1|^2-|\eta_1|^2-4\delta)s-\lambda_1(k_1-\eta_1)s^2]}.
	\end{multline}
	We assume that $\lambda_1$ is small and so can be neglected, so that
	\begin{align}
	\widehat{\psi_n^-}(\bm{k},t)=&
	\frac{1}{4\pi\varepsilon}e^{-\frac{i}{\varepsilon}t\widehat{H^-}^\varepsilon}
	\int_0^\infty \diff s
	\int_{\mathbb{R}} \diff \eta
	\left(\frac{k_1^2-\eta_1^2}{4\delta}\right)^n
	e^{-\frac{i\tau_r(k_1-\eta_1)}{2\delta\varepsilon}}e^{-\frac{\tau_c(|k_1-\eta_1|)}{2\delta\varepsilon}}\nonumber\\&\times
	\widehat{\phi_0^+}^\varepsilon(\eta_1,k_2,...,k_d)e^{\frac{i}{2\varepsilon}(|k_1|^2-|\eta_1|^2-4\delta)s}.
	\end{align}
	Although here we restrict to flat avoided crossings, we expect the result for one dimensional tilted avoided crossings \cite{BetzGoddard11}, when $\lambda_1\neq0$, should also be applicable in higher dimensions. From here we can follow the derivation in \cite{BetzGoddardTeufeul09}	
	and obtain an extension its main result to $d$ dimensions, under the assumptions that the direction of travel is in the first dimension, and that $\rho$ does not vary significantly in other directions:
	\begin{multline}
	\widehat{\psi^-}^\varepsilon(\bm{k},t)=e^{-\frac{i}{\varepsilon}t\widehat{H}^-(\bm{k})}\frac{\nu(k_1)+k_1}{2|\nu(k_1)|}	e^{-\frac{i}{\varepsilon}(k_1-\nu(k_1))x_{0}}
	e^{-\frac{\tau_c}{2\delta\varepsilon}|k_1-\nu(k_1)|}\\
	\times e^{-\frac{i\tau_r}{2\delta\varepsilon}(k_1-\nu(k_1))}\chi_{k_1^2>4\delta}\widehat{\phi_0^+}^\varepsilon(\nu(k_1),k_2,...,k_d),\label{eq:nDFlatFormula}
	\end{multline}
	By the linearity of the Schr\"{o}dinger equation this result can be extended to systems where $\rho$ does vary in other directions by decomposing space into strips and approximating the potential on each strip. We outline the method with the following algorithm and 2D diagrams in \cref{fig:SlicingAlgorithm}:
	\begin{enumerate}\label{Algorithm:Slicing}
	
		\item Begin with an initial wave packet $\psi_0^+(\bm{x})$ on the upper adiabatic energy surface, far from the crossing, with momentum such that the centre of mass of the wavepacket will obtain a minimum value of $\rho$ (\cref{fig:SlicingAlgorithmA}).
		
		\item Evolve $\psi_0^+$ on the upper level, i.e.\ under the BOA, until its centre of mass reaches a local minimum at time $t$: $\phi_0^+(\bm{x})\coloneqq e^{-\frac{i}{\varepsilon}tH^+}\psi_0^+(\bm{x})$.
		
		\item Calculate the centre of momentum $\bm{p}_{\mathrm{COM}}=\frac{\int_{\mathbb{R}^n}\diff \bm{p}\bm{p}|\widehat{\phi_0^+}^\varepsilon(\bm{p})|^2}{\int_{\mathbb{R}^n}\diff \bm{p}|\widehat{\phi_0^+}^\varepsilon(\bm{p})|^2}$ of $\phi_0^+(\bm{x})$.
		
		\item Divide up the full $d$-dimensional space into $d$-dimensional strips parallel to $\bm{p}_{\mathrm{COM}}$.  
		The width of the strips in all directions perpendicular to $\bm{p}_{\mathrm{COM}}$ 
		should be of the order of the width of the transition region (along $\bm{p}_{\mathrm{COM}}$) 
		in the optimal superadiabatic basis.
		In practice we restrict these strips to the region of space where the wavepacket has significant mass.
		
		\item On each strip, replace the true potential energy matrix by an approximation that is flat perpendicular to the direction of $\bm{p}_{\mathrm{COM}}$.  In practice, we take the potential along $\bm{p}_{\mathrm{COM}}$ and replicate it in the 
		directions perpendicular to $\bm{p}_{\mathrm{COM}}$.  Note in particular that the new potential may be different for each strip.
				
		\item Compute the transmitted wavepacket on the lower level for each strip by applying the formula \cref{eq:nDFlatFormula}  along $\bm{p}_{\mathrm{COM}}$ (\cref{fig:SlicingAlgorithmD}) and sum them together:
		$\widehat{\psi^-}^\varepsilon(\bm{k},t)=\sum_{j=1}^n\widehat{\psi_j^-}^\varepsilon(\bm{k},t)$.
				
		\item Evolve the transmitted wavepacket away from the avoided crossing on the lower level, say to time $t+s$, using the BOA (\cref{fig:SlicingAlgorithmF}): $\widehat{\psi^{-}}^\varepsilon(\bm{k},t+s)=e^{-\frac{i}{\varepsilon}s\widehat{H^-}^\varepsilon}\widehat{\psi^{-}}^\varepsilon(\bm{k},t)$.
	\end{enumerate}

As justification for the proposed algorithm we note that we are evolving the wavepacket on the new potential energy surface, 
restricted to each  strip.  As such, we discard any part of the wavepacket that leaves the strip and ignore any additional parts entering from 
other strips. Since the Schr\"odinger equation is linear, this introduces two types of error, due to:
(i) the modification of the potential in each strip, and (ii) the wavepacket broadening out of the selected strip, or into it from the outside.
Both errors are small, the first because the strip is quite narrow (so the potential is approximately constant), the second because the time 
that we actually evolve for is small (of the order of the crossing region in the optimal superadiabatic basis).

In practice, for the examples in \cref{Section:Numerics}, we compute the  BOA dynamics on a uniform 2-dimensional grid.  Once the centre of mass
 of the wavepacket reaches the crossing, we interpolate the wavepacket onto a grid with the new $p_1$ direction 
 parallel to that of $\bm{p}_{\mathrm{COM}}$.  
 Instead of treating strips of the appropriate width, we simply apply the formula  \cref{eq:nDFlatFormula} along each of the 1D lines
 parallel to $p_1$ (or $\bm{p}_{\mathrm{COM}}$); this reduces to applying the 1D formula.  
 For small $\epsilon$, this is essentially equivalent to the algorithm above as the approximate potentials
 of neighbouring lines are very similar and the evolution time in the optimal superadiabatic basis is very short.

To summarise, we have derived an algorithm for approximating the transmitted wavepacket for an avoided crossing in any dimension, which only requires one-level dynamics, and local information about the adiabatic electronic surfaces, {\em i.e.}\ $\delta$ and $\tau^{cz}$. A similar method can be used to determine transmitted wavepackets from lower to upper levels. In the following section, we show that the algorithm above can accurately and efficiently produce an approximation for the transmitted wavepacket.

	\section{Numerical results}\label{Section:Numerics}
	We perform the algorithm on a selection of examples, and compare it to the two level `exact' computation, where the Strang Splitting method is used. For all examples we consider two wavepackets given in momentum space by:
	\begin{align}
	\widehat{\psi_0}^\varepsilon(\bm{p}) &= \frac{1}{N_\psi}\exp{\left(-\frac{\|\bm{p}-\bm{p}_0\|^2}{2\varepsilon}\right)}\exp{\left(-i\frac{(\bm{p}-\bm{p}_0)\cdot\bm{x}_0}{\varepsilon}\right)}\label{eqn:Gaussian},\\
	\widehat{\phi_0}^\varepsilon(\bm{p}) &= \frac{1}{N_\phi}\exp{\left(-\frac{\|\bm{p}-\bm{p}_0\|^6}{2\varepsilon}\right)}\exp{\left(-i\frac{(\bm{p}-\bm{p}_0)\cdot\bm{x}_0}{\varepsilon}\right)}\label{eqn:NonGaussian},
	\end{align}
	where $N_\alpha$ are normalisation constants. To know the momentum of the wavepacket at the avoided crossing, we choose to define the wavepackets at the avoided crossing point, then evolve backwards in time away from the avoided crossing using one level dynamics, before evolving forwards and applying the formula. In practice the initial wavepacket can be given in any initial location, provided it is far enough from the avoided crossing to be unaffected by coupling effects.
	
	To compare the formula results to exact calculations we use the $L^2$-relative error:
	\begin{align}
	Er_{\mathrm{rel}}(\psi_1,\psi_2) = \max\left(\frac{\|\psi_1\pm\psi_2\|}{\|\psi_1\|},\frac{\|\psi_1\pm\psi_1\|}{\|\psi_2\|}\right),
	\end{align}
	Where $\|\cdot\|$ is the standard $L^2$-norm. For comparison to other algorithms which do not calculate phase, it is also beneficial to consider the relative absolute error
	\begin{align}
	Er_{\mathrm{abs}}(\psi_1,\psi_2) = \max\left(\frac{\||\psi_1|-|\psi_2|\|}{\|\psi_1\|},\frac{\||\psi|_1-|\psi|_1\|}{\|\psi_2\|}\right).
	\end{align}
	or the relative mass error
	\begin{align}
	Er_{\mathrm{mass}}(\psi_1,\psi_2) = \max\left(\frac{\|\psi_1\|}{\|\psi_2\|},\frac{\|\psi_2\|}{\|\psi_1\|}\right)-1.
	\end{align}
	\begin{example}\label{Example:tanh}
		Consider the diabatic potential matrix
		\begin{align}
		V(\bm{x}) = \begin{pmatrix}
		\tanh(x_1) & \delta \\
		\delta & -\tanh(x_1) 
		\end{pmatrix}.
		\end{align}
		This is a direct extension of a one dimensional problem, and as there is no dependence in $x_2$, the assumptions made in the derivation in \cref{Section:SlicingAlgorithm} are exactly valid, if the direction of the wavepacket is independent of $p_2$. The lower surface is given by $V_L=-V_U$. The upper adiabatic surface is shown in \cref{fig:VUTanh}.
		\begin{figure}[ht!]
			\centering
			\subfloat[][]{\includegraphics[width=0.5\textwidth]{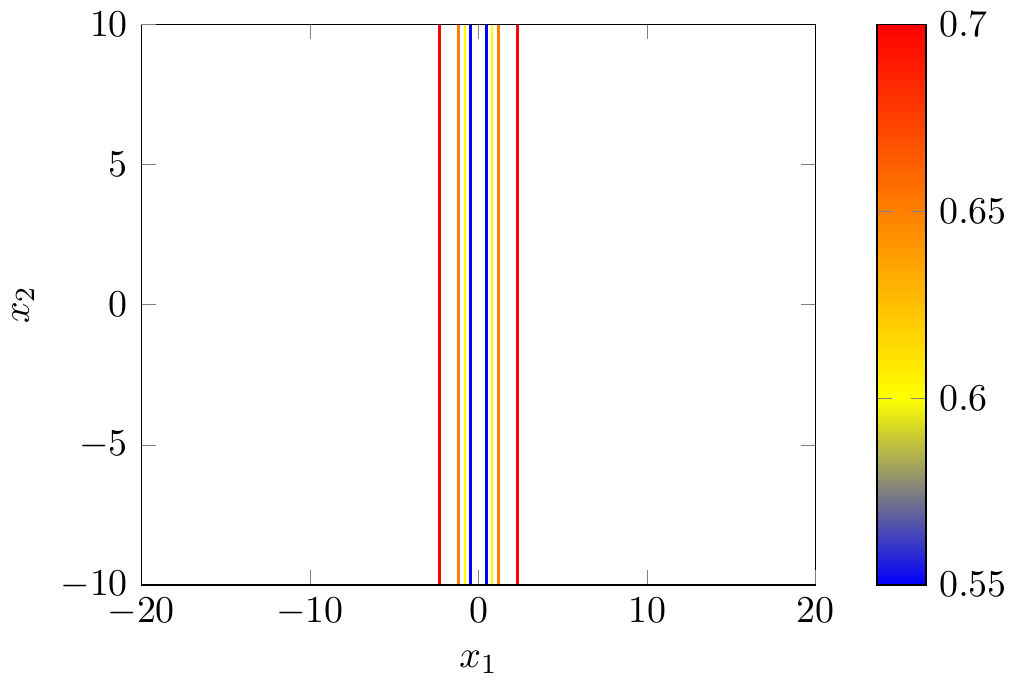}\label{fig:VUTanh}}
			\subfloat[][]{\includegraphics[width=0.48\textwidth]{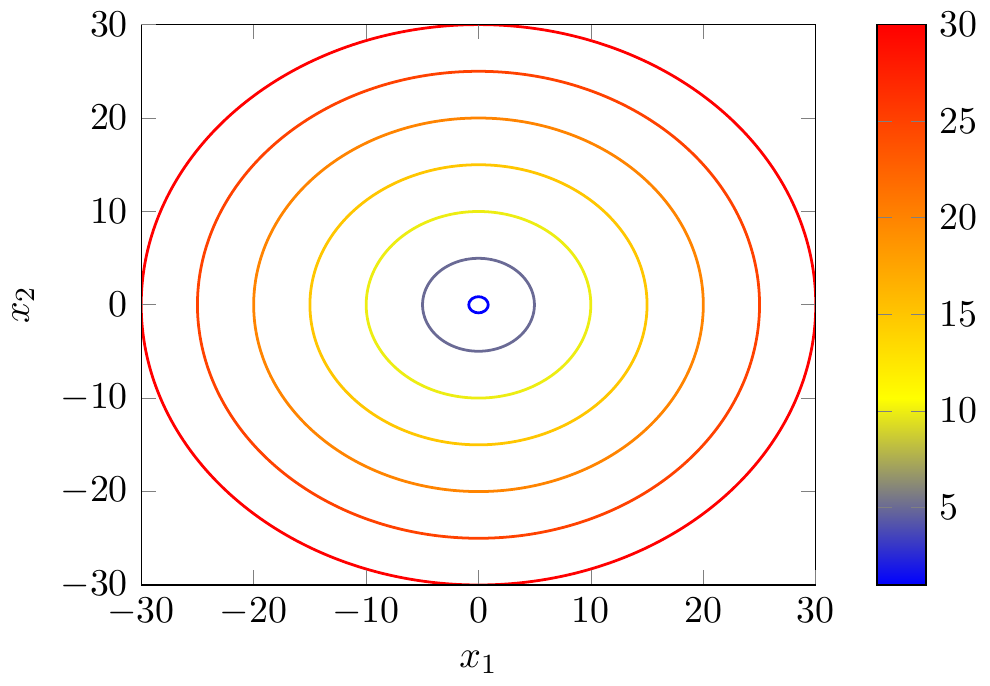}\label{fig:VUJT}}\\
			\caption{Contour plot of the upper adiabatic potential surfaces for \cref{Example:tanh} (left) and \cref{Example:CJLM} (right). In these examples, $V_U=-V_L$. }\label{fig:Potentialsurfaces}			
		\end{figure}
		We take parameters
		\begin{align}\label{Parameters1}
		\left\{\varepsilon,\delta,\bm{p}_0,\bm{x_0}\right\} &= \left\{\frac{1}{30},\frac{1}{2},(6,1),(0,0)\right\}.
		\end{align}
		Using a mesh of $2^{13}\times 2^{13}$ points on the domain $[-20,20]^2$, starting at time 0, we evolve the wavepacket back to time -2 with time-step $1/(50*\|\bm{p}_0\|)$, then evolve forwards to time 2, applying the algorithm, and compare to the exact calculation. For the Gaussian wavepacket $\psi$, $Er_{\mathrm{rel}} = 0.0151$, $Er_{\mathrm{abs}}= 0.0151$, and $Er_{\mathrm{mass}}= 0.0016$. For non-Gaussian $\phi$ $Er_{\mathrm{rel}} = 0.0389$, $Er_{\mathrm{abs}}=0.0387$, and $Er_{\mathrm{mass}}= 0.0023$. The result of the formula and corresponding error are shown in \cref{fig:tanhFormulaResult,fig:tanhFormulaResultNG}. 
		\begin{figure}[ht!]
			\centering
			\subfloat[][]{\includegraphics[width=0.5\textwidth]{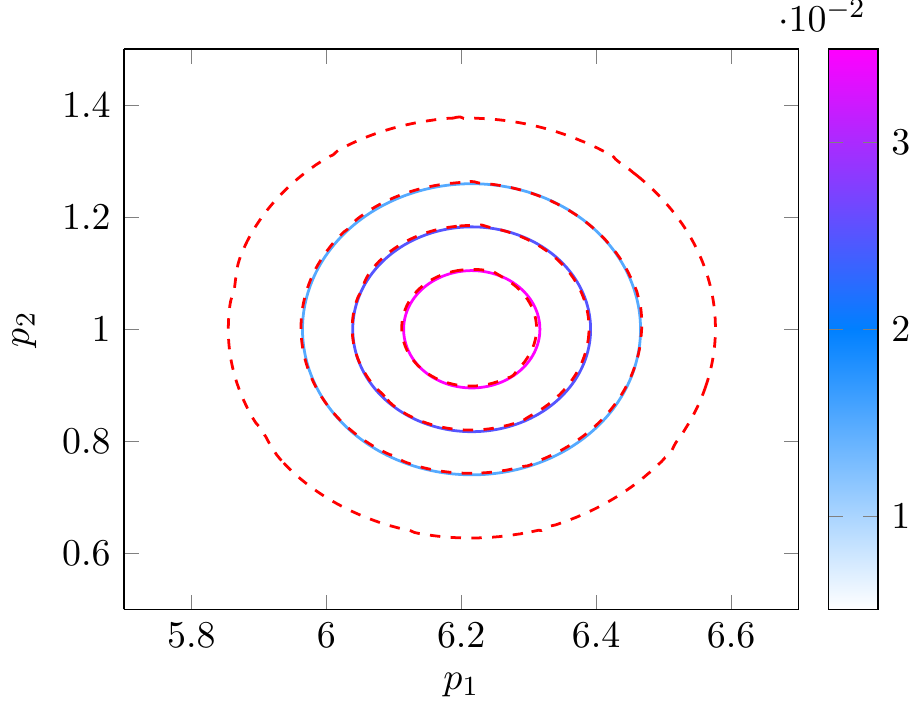}\label{fig:tanhFormulaResult1}}
			\subfloat[][]{\includegraphics[width=0.5\textwidth]{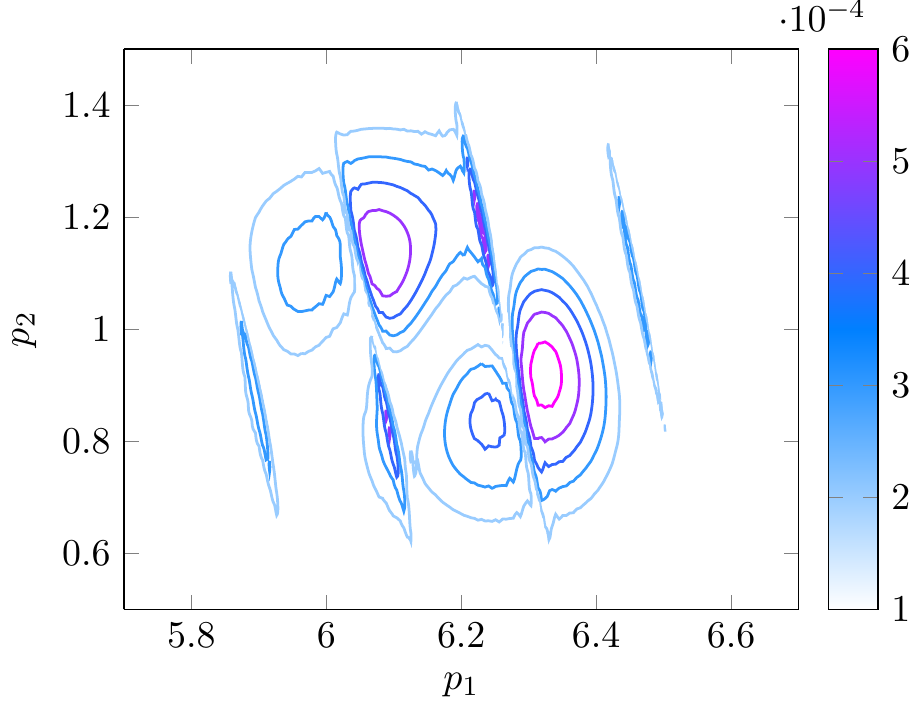}\label{fig:tanhFormulaResult2}}\\
			\caption{Results for \cref{Example:tanh}, when using parameters in \cref{Parameters1} with initial wavepacket of form \cref{eqn:Gaussian}. Left: exact calculation (solid line) versus formula result (dashed line). Contours for the formula result are at the same values as the neighbouring exact contours. Right: relative error.}\label{fig:tanhFormulaResult}			
		\end{figure}
		\begin{figure}[ht!]
			\centering
			\subfloat[][]{\includegraphics[width=0.485\textwidth]{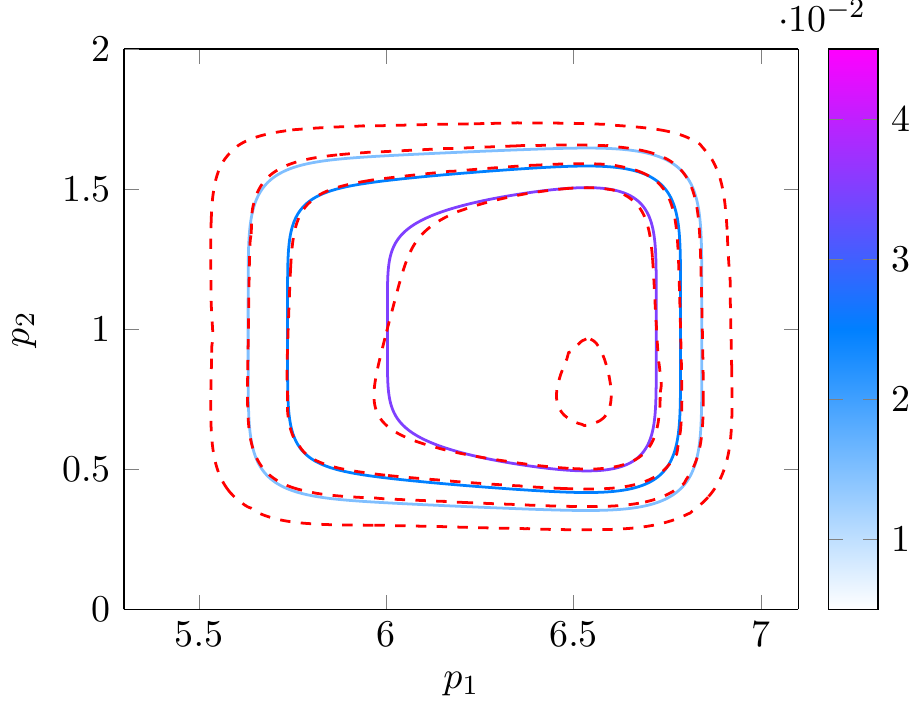}\label{fig:tanhFormulaResult1NG}}
			\subfloat[][]{\includegraphics[width=0.5\textwidth]{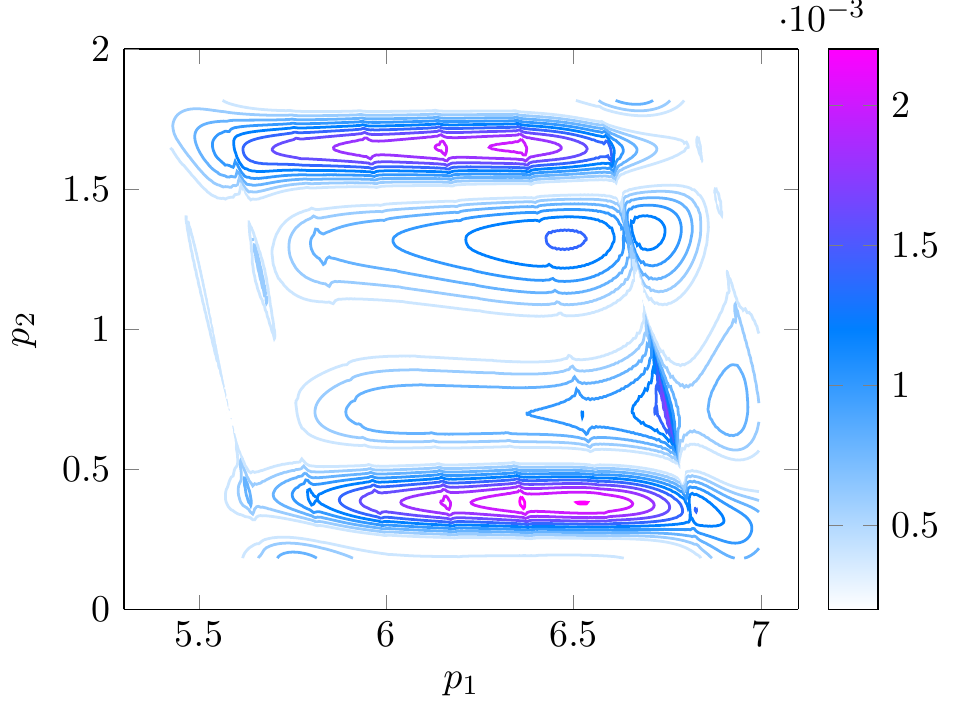}\label{fig:tanhFormulaResult2NG}}\\
			\caption{As in \cref{fig:tanhFormulaResult}, but with initial wavepacket  \cref{eqn:NonGaussian}.}\label{fig:tanhFormulaResultNG}			
		\end{figure}
	\end{example}
	\begin{example}\label{Example:CJLM}
		We consider the diabatic potential matrix described in \cite{ChaiJinLiMorandi15}
		\begin{align}
		V(\bm{x}) = \begin{pmatrix}
		x_1 & \sqrt{x_2^2+\delta^2} \\
		\sqrt{x_2^2+\delta^2} & -x_1 
		\end{pmatrix},
		\end{align}
		which is a modified Jahn-Teller diabatic potential, where the conical intersection is replaced with an avoided crossing with gap $2\delta$. The upper adiabatic surface is shown in \cref{fig:VUJT}. We use parameters
		\begin{align}\label{Parameters2}
		\left\{\varepsilon,\delta,\bm{p}_0,\bm{x_0}\right\} &= \left\{\frac{1}{30},0.5,(5,2),(0,0)\right\},
		\end{align}	
		a mesh of $2^{13}\times 2^{13}$ points on the domain $[-40,40]^2$, we start at time 0, and evolve backwards with time-step $1/(50*\|\bm{p}_0\|)$ to time $-20/\|\bm{p}_0\|^2$, then forwards to $20/\|\bm{p}_0\|^2$, we find $Er_{\mathrm{rel}} = 0.0351, Er_{\mathrm{abs}}=0.0304 $, and $Er_{\mathrm{mass}}=0.0029$ using Gaussian initial wavepacket $\psi_0$, and $Er_{\mathrm{rel}} = 0.0679, Er_{\mathrm{abs}}= 0.0616$, and $Er_{\mathrm{mass}}=0.0033$ for non-Gaussian initial wavepacket $\phi_0$. \Cref{fig:CJLMFormulaResult,fig:CJLMFormulaResultNG} display the result of the formula compared to the exact calculation. 
		\begin{figure}[ht!]
			\centering
			\subfloat[][]{\includegraphics[width=0.5\textwidth]{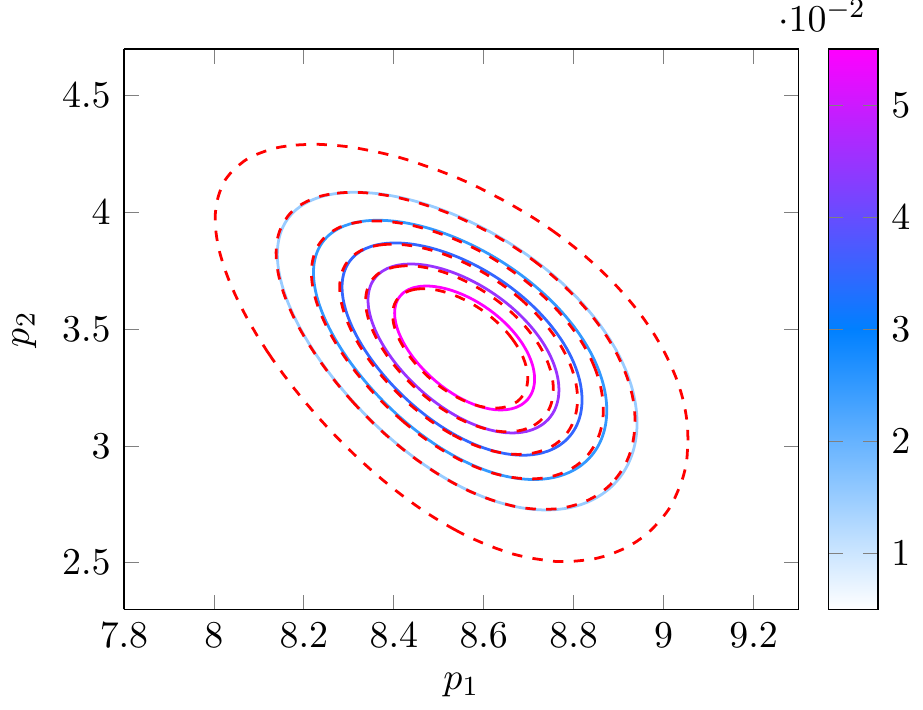}\label{fig:CJLMFormulaResult1}}
			\subfloat[][]{\includegraphics[width=0.5\textwidth]{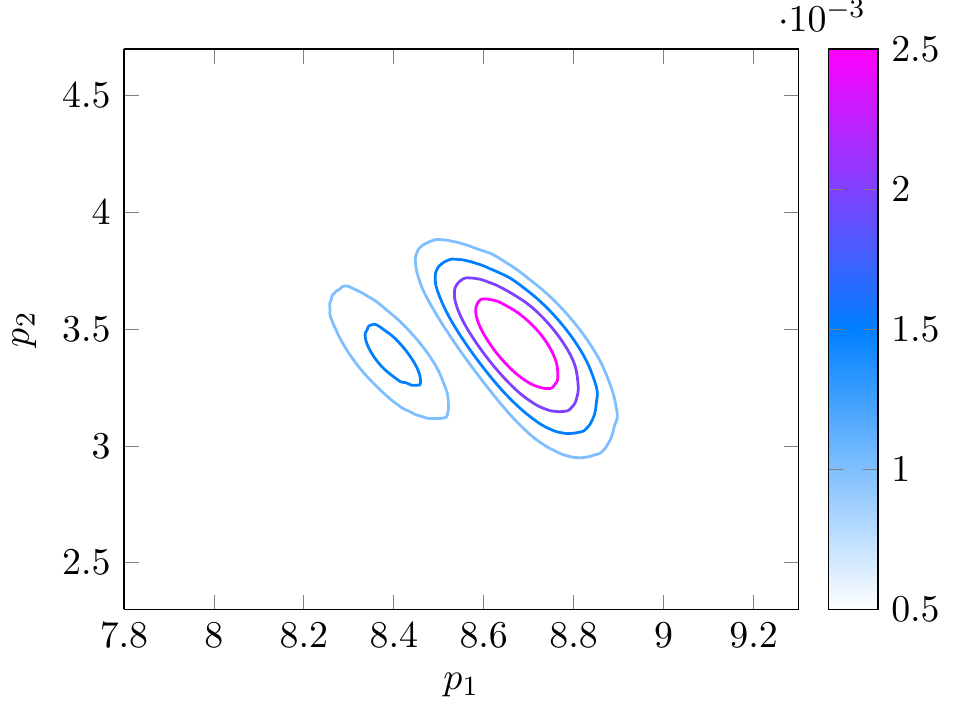}\label{fig:CJLMFormulaResult2}}\\
			\caption{Results for \cref{Example:CJLM}, when using parameters in \cref{Parameters2} with initial wavepackets of form \cref{eqn:Gaussian}. Results are presented as in \cref{fig:tanhFormulaResult}.}\label{fig:CJLMFormulaResult}			
		\end{figure}
		\begin{figure}[ht!]
			\centering
			\subfloat[][]{\includegraphics[width=0.48\textwidth]{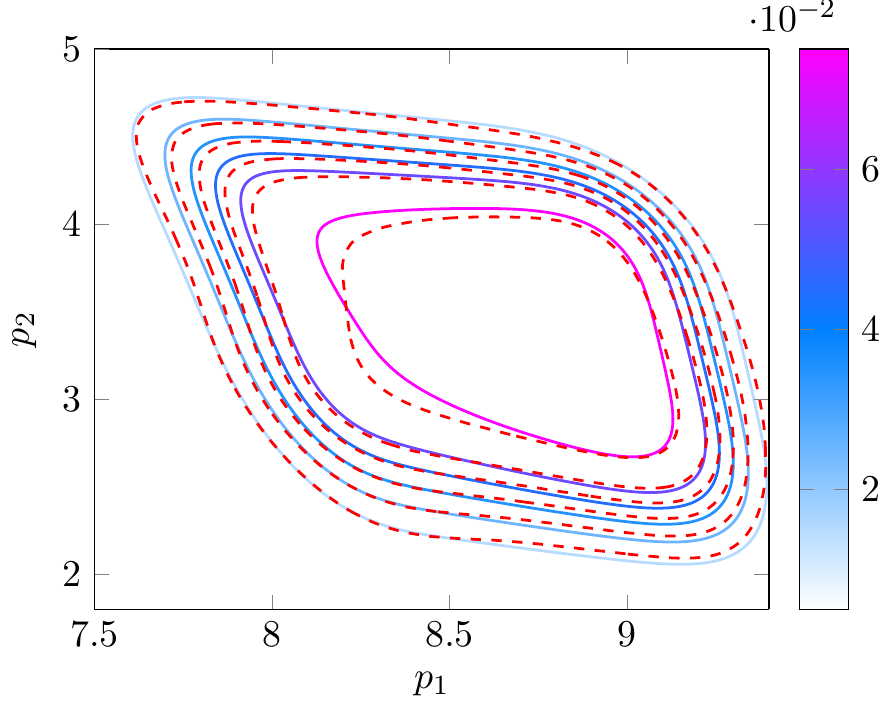}\label{fig:CJLMFormulaResult1NG}}
			\subfloat[][]{\includegraphics[width=0.5\textwidth]{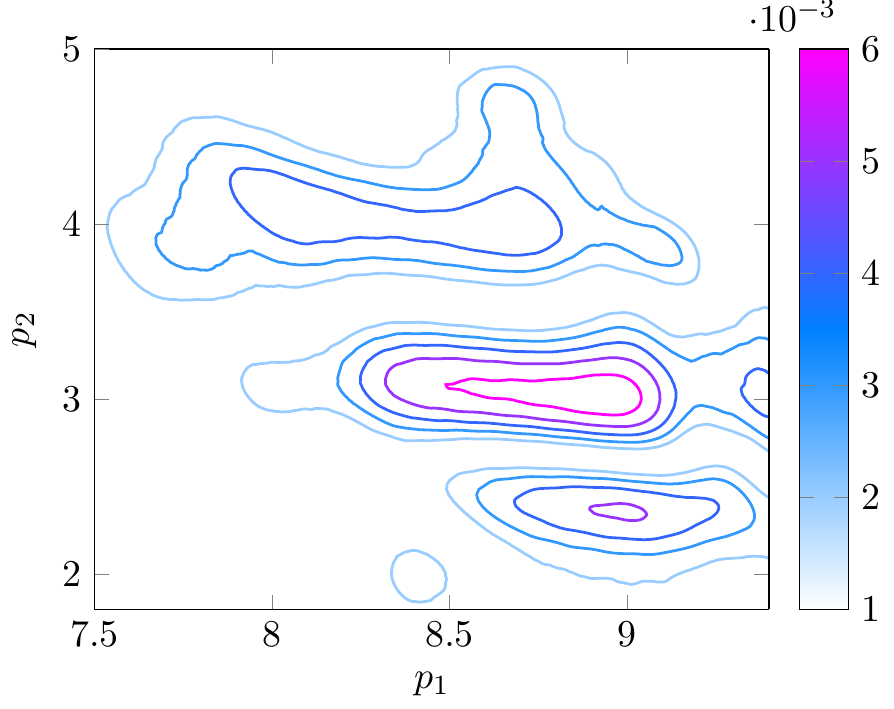}\label{fig:CJLMFormulaResult2NG}}\\
			\caption{As in \cref{fig:CJLMFormulaResult}, but with initial wavepacket \cref{eqn:NonGaussian}.}\label{fig:CJLMFormulaResultNG}			
		\end{figure}					
		By using instead the parameters
		\begin{align}\label{Parameters3}
		\left\{\varepsilon,\delta,\bm{p}_0,\bm{x_0}\right\} &= \left\{\frac{1}{30},0,(5,0),(0,0.5)\right\}
		\end{align}	
		we consider the Jahn-Teller potentials, which include a conical intersection. We have chosen momentum such that the centre of mass of the wavepacket does not cross the intersection. We evolve back to $-25/\|\bm{p}_0\|^2$ with a time-step of $1/(50*\|\bm{p}_0\|)$, then evolve forwards to $25/\|\bm{p}_0\|^2$ using the algorithm, and compare with the exact calculation. Then $Er_{\mathrm{rel}} = 0.0650, Er_{\mathrm{abs}}= 0.0563$, and $Er_{\mathrm{mass}}= 0.0309$ for initial wavepacket of form $\psi_0$ and $Er_{\mathrm{rel}} = 0.1532, Er_{\mathrm{abs}}= 0.0884 $, and $Er_{\mathrm{mass}}=0.0606$ for $\phi_0$, the transmitted wavepacket and error is given in \cref{fig:CJLMFormulaResult}. Although the relative error is large in this final calculation, the absolute error and mass error shows that the algorithm has performed well, given that it is not designed for systems where $\delta$ is small or vanishing. \cref{fig:JTFormulaResultNG} also shows that the shape of the wavepacket is still well approximated qualitatively. 
		
		We note that the relative and absolute error in \cref{Example:CJLM} differ, while in \cref{Example:tanh} they are the same. We believe this is due to a change in phase when $\rho$ is not flat in $q_2$, so the error due to the modification of the potential surface for each strip is larger.
		\begin{figure}[ht!]
			\centering
			\subfloat[][]{\includegraphics[width=0.5\textwidth]{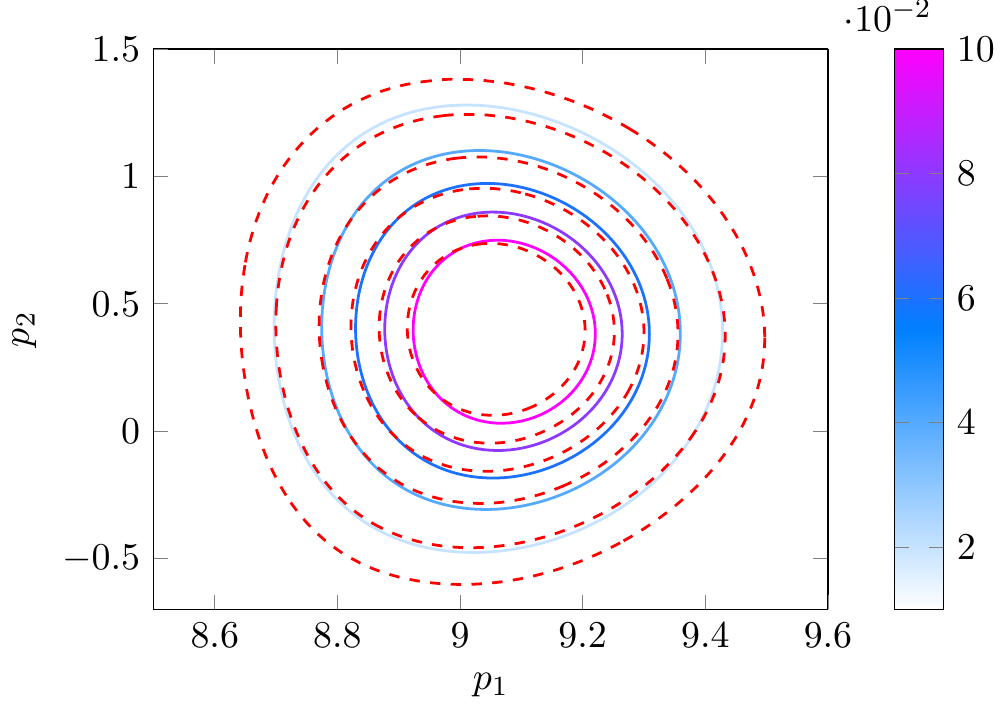}\label{fig:JTFormulaResult1}}
			\subfloat[][]{\includegraphics[width=0.5\textwidth]{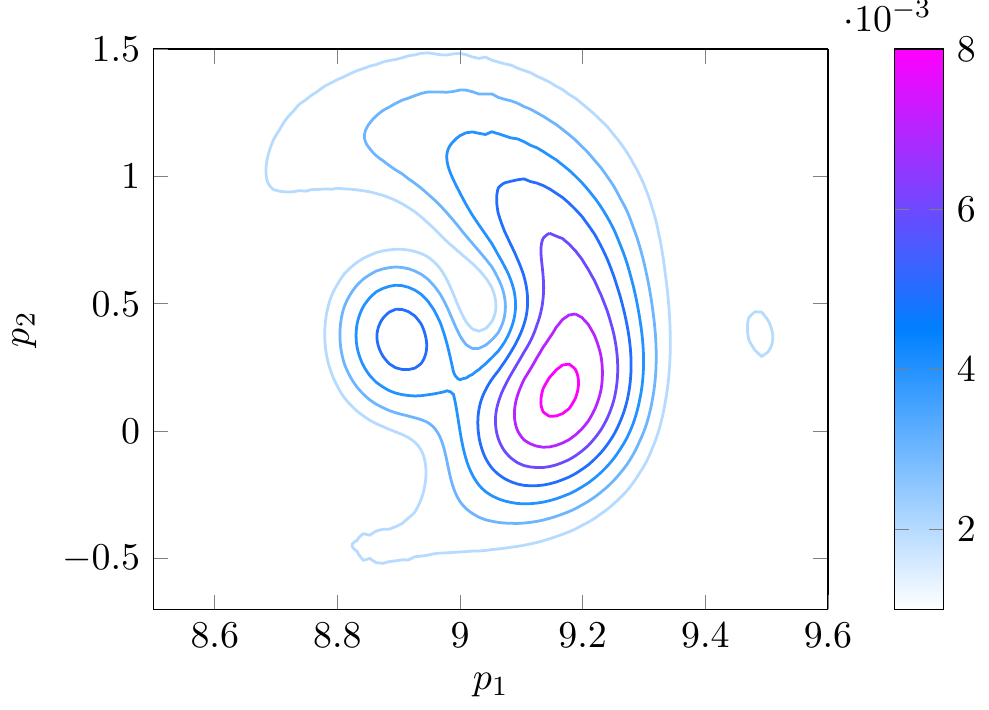}\label{fig:JTFormulaResult2}}\\
			\caption{As in \cref{fig:CJLMFormulaResult}, but with parameters \cref{Parameters3}.}\label{fig:JTFormulaResult}			
		\end{figure}
		\begin{figure}[ht!]
			\centering
			\subfloat[][]{\includegraphics[width=0.5\textwidth]{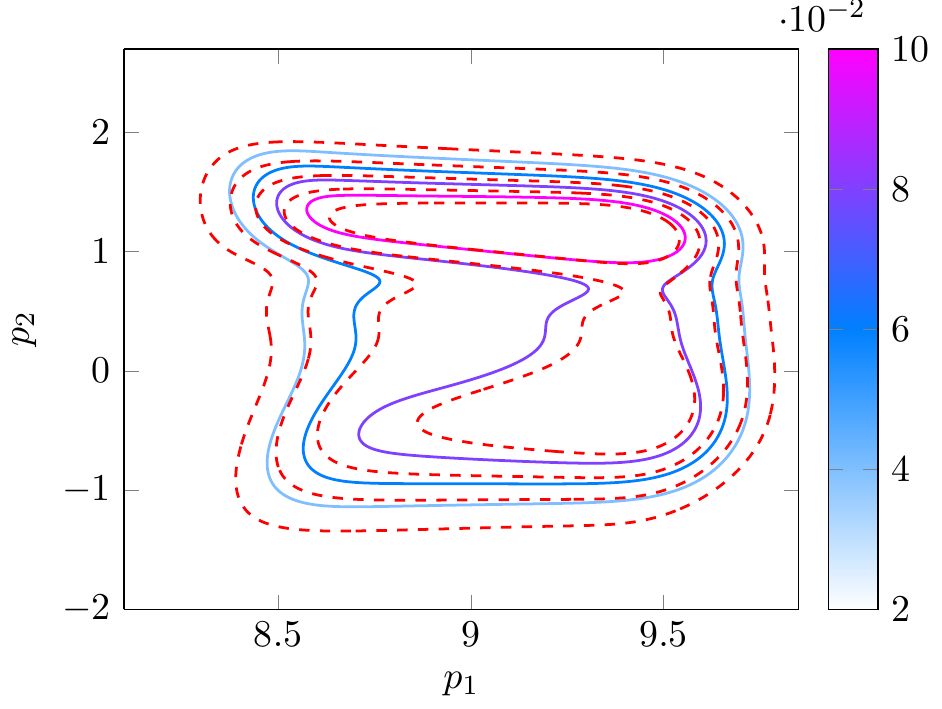}\label{fig:JTFormulaResult1NG}}
			\subfloat[][]{\includegraphics[width=0.5\textwidth]{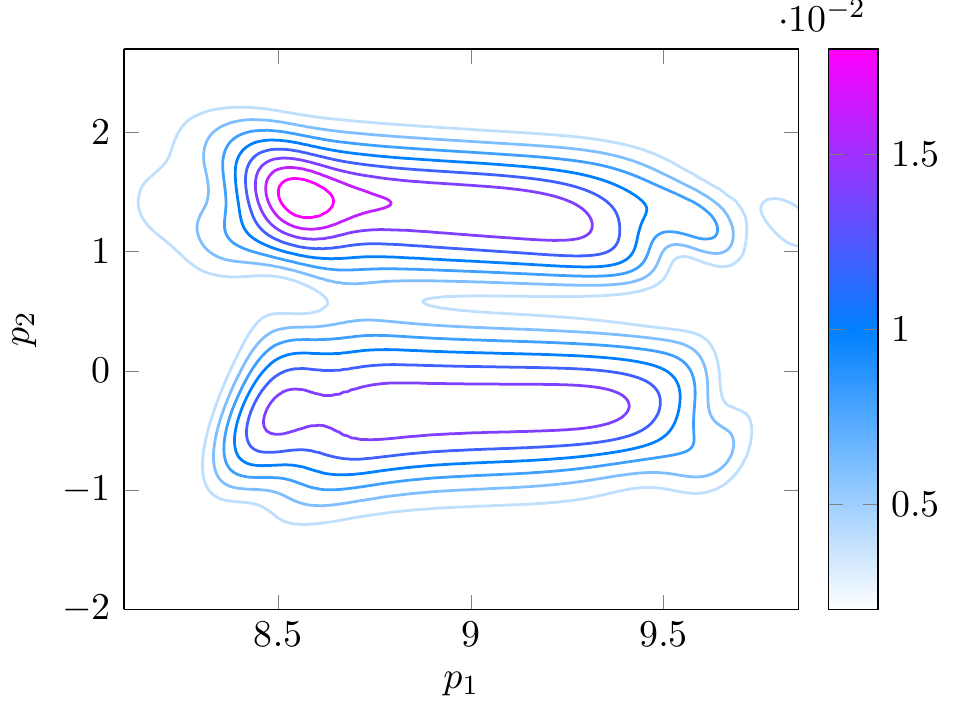}\label{fig:JTFormulaResult2NG}}\\
			\caption{As in \cref{fig:JTFormulaResult}, but with initial wavepacket \cref{eqn:NonGaussian}.}\label{fig:JTFormulaResultNG}			
		\end{figure}
	\end{example}

	\section{Conclusions and Future Work}\label{Section:Conclusions}	
	In this paper we have constructed an algorithm which can be used to approximate the transmitted wavepacket in non-adiabatic transitions in multiple dimensions, by constructing a formula based on the one dimensional result in \cite{BetzGoddard09}, and appealing to the linearity of the Schr\"{o}dinger equation to decompose the dynamics onto strips with potentials that are constant in all but one direction. Presented examples in two dimensions show similar accuracy to one dimensional analogues, and are accurate in the phase, which is beyond the capability of standard surface hopping models.
	
	Correctly approximating the phase of the wavepacket becomes important when more than one transition takes place. In \cite{GoddardHurst18} various one dimensional examples of multiple transitions are explored using the formula, with accurate results. In future work we will consider multiple transitions in two dimensions using the algorithm. This will involve taking into account the effect of geometric phase \cite{GeometricPhaseBook} due to multiple avoided crossings, as well as constructing an approximation of the wavepacket which remains on the upper level after a transition has taken place.
	\bibliographystyle{siamplain}
	\bibliography{2DQMDBibliography.bib}
\end{document}